\newcommand*{\addFileDependency}[1]{% argument=file name and extension
  \typeout{(#1)}
  \@addtofilelist{#1}
  \IfFileExists{#1}{}{\typeout{No file #1.}}
}
\theoremstyle{remark}
\newtheoremstyle{mytheoremstyle} % name
    {\topsep}                    % Space above
    {\topsep}                    % Space below
    {\upshape}                   % Body font
    {.5em}                           % Indent amount
    {\itshape}                   % Theorem head font
    {.}                          % Punctuation after theorem head
    {.5em}                       % Space after theorem head
    {}  % Theorem head spec (can be left empty, meaning ‘normal’)
\theoremstyle{mytheoremstyle}
\newtheoremstyle{iremark}
  {\topsep}   % ABOVESPACE
  {\topsep}   % BELOWSPACE
  {\upshape}  % BODYFONT
  {0.2in}       % INDENT (empty value is the same as 0pt)
  {\itshape}  % HEADFONT
  {.}         % HEADPUNCT
  {5pt plus 1pt minus 1pt} % HEADSPACE
  {\thmname{#1}\thmnumber{ \itshape#2}\thmnote{ (#3)}} % CUSTOM-HEAD-SPEC
\theoremstyle{iremark}
\newtheorem{lemma}{Lemma}
\newtheorem{remark}{Remark}
\newtheorem*{proof}{Proof}
\DeclareFontFamily{U}{mathx}{\hyphenchar\font45}
\DeclareFontShape{U}{mathx}{m}{n}{
	<5> <6> <7> <8> <9> <10>
	<10.95> <12> <14.4> <17.28> <20.74> <24.88>
	mathx10
}{}
\DeclareSymbolFont{mathx}{U}{mathx}{m}{n}
\DeclarePairedDelimiter\abs{\lvert}{\rvert}%
\DeclarePairedDelimiter\absbig{\Big\lvert}{\Big\rvert}%
\DeclarePairedDelimiter\absbigs{\big\lvert}{\big\rvert}%
\DeclarePairedDelimiter\floor{\Bigg\lfloor}{\Bigg\rfloor}
\DeclarePairedDelimiter\floorr{\lfloor}{\rfloor}
\renewcommand\d[1]{\ensuremath{%
		\;\mathrm{d}#1\@ifnextchar\d{\!}{}}}
\newcommand*\rel@kern[1]{\kern#1\dimexpr\macc@kerna}
\newcommand*\widebar[1]{%
  \begingroup
  \def\mathaccent##1##2{%
    \rel@kern{0.8}%
    \overline{\rel@kern{-0.8}\macc@nucleus\rel@kern{0.2}}%
    \rel@kern{-0.2}%
  }%
  \macc@depth\@ne
  \let\math@bgroup\@empty \let\math@egroup\macc@set@skewchar
  \mathsurround\z@ \frozen@everymath{\mathgroup\macc@group\relax}%
  \macc@set@skewchar\relax
  \let\mathaccentV\macc@nested@a
  \macc@nested@a\relax111{#1}%
  \endgroup
}
\newcommand{\norm}[1]{\left\lVert#1\right\rVert}
\newcommand{\normbig}[1]{\Big\lVert#1\Big\rVert}
\newcommand{\thnew}[1]{ {#1^{\rm{th} } } }
\newcommand{\Eee}{\mathbb{E}}
\newcommand{\hdet}{ \underset{\mathcal{H}_0}{\overset{\mathcal{H}_1}{\gtrless}} }
\newcommand{\sigmabar}{\widebar{\sigma}}
\newcommand{\sigmatilde}{\widetilde{\sigma}}
\newcommand{\snr}{{\rm{SNR}}}
\newcommand{\pfa}{P_{\rm{fa}}}
\newcommand{\sigmax}{\sigma_x}
\newcommand{\Tsym}{ T_{\rm{sym}} }
\newcommand{\Tcp}{ T_{\rm{cp}} }
\newcommand{\FF}{ \mathbf{F} }
\newcommand{\deltaf}{ \Delta f }
\newcommand{\fc}{ f_c }
\newcommand{\Ntx}{ N_{\rm{T}} }
\newcommand{\Nrx}{ N_{\rm{R}} }
\newcommand{\ftx}{ \ff_{\rm{T}} }
\newcommand{\atx}{ \aaa_{\rm{T}} }
\newcommand{\arx}{ \aaa_{\rm{R}} }
\newcommand{\wwconj}{ \ww^{*} }
\newcommand{\wwhat}{ \widehat{\ww} }
\newcommand{\hhtilde}{ \widetilde{\hh} }
\newcommand{\sm}{ s_{m} }
\newcommand{\hhhat}{ \widehat{\hh} }
\newcommand{\hhtildehat}{ \widehat{\hhtilde} }
\newcommand{\llr}{\mathcal{L}}
\newcommand{\llrlog}{\mathcal{L}^{\rm{log}}}
\newcommand{\projrange}[1]{\boldsymbol{\Pi}_{#1}}
\newcommand{\projnull}[1]{\boldsymbol{\Pi}^{\perp}_{#1}}
\newcommand{\xnm}{ x_{n,m} }
\newcommand{\boldHhat}{ \widehat{\boldH} }
\newcommand{\wtheta}{\mathcal{W}_{\thetahat}}
\newcommand{\alphahat}{ \widehat{\alpha} }
\newcommand{\deltar}{ \Delta R }
\newcommand{\deltav}{ \Delta v }
\newcommand{\Rmax}{ R_{\rm{max}} }
\newcommand{\vmax}{ v_{\rm{max}} }
\newcommand{\numax}{ \nu_{\rm{max}} }
\newcommand{\numaxcfo}{ \nu^{\rm{cfo}}_{\rm{max}} }
\newcommand{\rect}[1]{ { \rm{rect} }\left(#1\right) }
\newcommand{\Smcal}{ \mathcal{S} }
\newcommand{\mub}{ \boldsymbol{\mu} }
\newcommand{\mtCN}{{\mathcal{CN}}}
\newcommand{\vecc}[1]{ {\rm{vec}}\left(#1\right)  }
\newcommand{\diag}[1]{ {\rm{diag}}\left(#1\right)  }
\newcommand{\Imatrix}{{ \boldsymbol{\mathrm{I}} }}
\newcommand{\aaa}{\mathbf{a}}
\newcommand{\cc}{ \mathbf{c} }
\newcommand{\bb}{ \mathbf{b} }
\newcommand{\nuhat}{{ \widehat{\nu} }}
\newcommand{\tauhat}{{ \widehat{\tau} }}
\newcommand{\thetahat}{{ \widehat{\theta} }}
\newcommand{\nuhatcfo}{{ \nuhat^{\rm{cfo}} }}
\newcommand{\boldzero}{{ {\boldsymbol{0}} }}
\newcommand{\Pmax}{{ P_{\rm{max}} }}
\newcommand{\ff}{\mathbf{f}}
\newcommand{\boldY}{ \mathbf{Y} }
\newcommand{\boldYfs}{ \mathbf{Y}^{\rm{FS}} }
\newcommand{\hhatfs}{ \boldHhat^{\rm{FS}} }
\newcommand{\boldX}{ \mathbf{X} }
\newcommand{\Xbbar}{ \widebar{\boldX} }
\newcommand{\Ybbar}{ \widebar{\boldY} }
\newcommand{\Zbbar}{ \widebar{\boldZ} }
\newcommand{\boldZ}{ \mathbf{Z} }
\newcommand{\boldQ}{ \mathbf{Q} }
\newcommand{\boldH}{ \mathbf{H} }
\newcommand{\boldD}{ \mathbf{D} }
\newcommand{\boldA}{ \mathbf{A} }
\newcommand{\boldR}{ \mathbf{R} }
\newcommand{\boldU}{ \mathbf{U} }
\newcommand{\boldLambda}{ \mathbf{\Lambda} }
\newcommand{\etatilde}{\widetilde{\eta}}
\newcommand{\yy}{ \mathbf{y} }
\newcommand{\xx}{ \mathbf{x} }
\newcommand{\hh}{ \mathbf{h} }
\newcommand{\ww}{ \mathbf{w} }
\newcommand{\zz}{ \mathbf{z} }
\newcommand{\DD}{ \mathbf{D} }
\newcommand{\boldPhi}{ \mathbf{\Phi} }
\newcommand{\transpose}[1]{ {#1}^{T} }
\newcommand{\complexset}[2]{ \mathbb{C}^{#1 \times #2}  }
\newcommand{\alphabar}{ \widebar{\alpha} }
\begin{document}

%%%%%%%%%%%%%%%%%% title page information %%%%%%%%%%%%%%%%%%
% \title{ICI-Aware Sensing with MIMO-OFDM Dual-Functional Radar-Communications: Parameter Estimation and Joint Beamforming Design}
\title{%ICI-Aware Sensing with 
MIMO-OFDM Joint Radar-Communications: \\Is ICI Friend or Foe?}

\author{Musa Furkan Keskin, \textit{Member, IEEE}, Henk Wymeersch, \textit{Senior Member, IEEE}, and Visa Koivunen, \textit{Fellow, IEEE}\thanks{Musa Furkan Keskin and Henk Wymeersch are with the Department of Electrical Engineering, Chalmers University of Technology, SE 41296 Gothenburg, Sweden (e-mail: furkan@chalmers.se). Visa Koivunen is with the Department of Signal Processing and Acoustics, Aalto University, FI 00076 Aalto, Finland. This work is supported, in part, by Vinnova grant 2018-01929, MSCA-IF grant
888913 (OTFS-RADCOM) and the European Commission through the H2020 project Hexa-X (Grant Agreement no. 101015956).\\
This study extends our previous work in \cite{ICI_aware_ICASSP_2021} by providing additional contributions including GLRT-OMP based multi-target detector design and ICI exploitation capability.}}
% (H2020-MSCA-IF-2019)
% Marie Skłodowska-Curie Individual Fellowships

% make the title area
\maketitle

\begin{abstract}
    Inter-carrier interference (ICI) poses a significant challenge for OFDM joint radar-communications (JRC) systems in high-mobility scenarios. In this paper, we propose a novel ICI-aware sensing algorithm for MIMO-OFDM JRC systems to detect the presence of multiple targets and estimate their delay-Doppler-angle parameters. First, leveraging the observation that spatial covariance matrix is independent of target delays and Dopplers, we perform angle estimation via the MUSIC algorithm. For each estimated angle, we next formulate the radar delay-Doppler estimation as a joint carrier frequency offset (CFO) and channel estimation problem via an APES (amplitude and phase estimation) spatial filtering approach by transforming the delay-Doppler parameterized radar channel into an unstructured form. To account for the presence of multiple targets at a given angle, we devise an iterative interference cancellation based orthogonal matching pursuit (OMP) procedure, where at each iteration the generalized likelihood ratio test (GLRT) detector is employed to form decision statistics, providing as by-products the maximum likelihood estimates (MLEs) of radar channels and CFOs. In the final step, target detection is performed in delay-Doppler domain using target-specific, ICI-decontaminated channel estimates over time and frequency, where CFO estimates are utilized to resolve Doppler ambiguities, thereby turning ICI from foe to friend. The proposed algorithm can further exploit the ICI effect to introduce an additional dimension (namely, CFO) for target resolvability, which enables resolving targets located at the same delay-Doppler-angle cell.
    % Furthermore, we show that ICI can be turned from foe to friend by adding a fourth (Doppler/CFO) dimension to resolve targets that are otherwise unresolvable in delay-Doppler-angle. 
    Simulation results illustrate the ICI exploitation capability of the proposed approach and showcase its superior detection and estimation performance in high-mobility scenarios over conventional methods.

	\textit{Index Terms--} OFDM, joint radar-communications, intercarrier interference, APES, CFO estimation.
	\vspace{-0.1in}
\end{abstract}

% \markboth{IEEE JSTSP, Special Issue on Joint Communication and Radar Sensing, \today}%
% {Shell \MakeLowercase{\textit{et al.}}: Bare Demo of IEEEtran.cls for IEEE Journals}

%%%%%%%%%%%%%%%%%%%%%%%%%%  body  %%%%%%%%%%%%%%%%%%%%%%%%%%
%\vspace{-0.2in}
\vspace{-0.1in}
\section{Introduction}\label{sec_intro}
\vspace{-0.05in}
%\subsection{Background and Motivation}
With the explosive growth of spectrally co-existent radars and communication systems in 5G and beyond wireless networks, joint radar-communications (JRC) strategies have become popular in recent years \cite{jointRadCom_review_TCOM,SPM_JRC_2019,SPM_Zheng_2019,chiriyath2017radar,Eldar_SPM_JRC_2020,Canan_SPM_2020}. 
%Due to the interest in co-existence and spectrum sharing among radar and communications, joint radar-communications (JRC) strategies are being actively developed for 5G and beyond wireless systems \cite{jointRadCom_review_TCOM,SPM_JRC_2019,SPM_Zheng_2019,chiriyath2017radar,Eldar_SPM_JRC_2020,Canan_SPM_2020}. 
A promising approach to practical JRC deployment is to design \textit{dual-functional radar-communications} (DFRC) systems, which employ a single hardware that can simultaneously perform radar sensing and data transmission with a co-designed waveform \cite{DFRC_SPM_2019,DFRC_Waveform_Design,SPM_JRC_2019}. Orthogonal
frequency-division multiplexing (OFDM) has been widely investigated as a DFRC waveform due its wide availability in wireless communication systems and its potential to achieve high radar performance \cite{RadCom_Proc_IEEE_2011,General_Multicarrier_Radar_TSP_2016,ICI_OFDM_TSP_2020}. In the literature, estimator design for OFDM radar sensing has been studied in both single-antenna \cite{RadCom_Proc_IEEE_2011,Firat_OFDM_2012,ICI_OFDM_TSP_2020} and multiple-input multiple-output (MIMO) \cite{mmWave_JRC_TAES_2019,MIMO_OFDM_radar_TAES_2020} settings.
%\vspace{-0.1in}

In high-mobility scenarios, such as millimeter-wave (mmWave) vehicular JRC systems \cite{SPM_JRC_2019}, Doppler-induced intercarrier interference (ICI) can significantly degrade the performance of OFDM from both radar and communications perspective \cite{OFDM_ICI_TVT_2017,ICI_OFDM_radar_2015,multiCFO_TWC_2018}. Due to lack of guard intervals in frequency domain, Doppler shifts produced by mobile targets can destroy the orthogonality of OFDM subcarriers at the receiver (i.e., the ICI effect) and reduce the dynamic range of radar due to increased side-lobe levels \cite{OFDM_ICI_TVT_2017}. To improve OFDM radar performance, various ICI mitigation approaches have been recently proposed \cite{Firat_OFDM_2012,ICI_OFDM_radar_2015,tian2017high,lellouch2014impact}. Considering a single-target scenario, a two-step Doppler estimation method is proposed in \cite{ICI_OFDM_radar_2015}, where transmit sequences with favorable correlation properties are employed to eliminate ICI. In a similar fashion, the studies in \cite{lellouch2014impact,tian2017high} design ICI compensation schemes for a single target relying on the condition that transmit symbols consist of phase codes with certain characteristics regarding cyclic shifts and auto-correlation. As a step further, the work in \cite{Firat_OFDM_2012} assumes arbitrary phase shift keying (PSK) symbols and proposes a pulse compression technique to compensate for ICI-induced phase rotations across OFDM subcarriers created by a single target, which can compensate for Doppler shifts at integer multiples of subcarrier spacing. In all of the above schemes, the common observation is that ICI can be harnessed to resolve Doppler ambiguity of a single target.

To tackle the more practical scenarios of multiple targets, recent works investigate range-velocity estimation techniques for multiple objects in the presence of ICI \cite{OFDM_ICI_TVT_2017,ICI_OFDM_TSP_2020,lowComp_ICI_2020}. Under the assumption that the number of targets is known a-priori, an alternating maximization approach that takes ICI into account is proposed in \cite{ICI_OFDM_TSP_2020} to reduce the complexity of high-dimensional maximum-likelihood (ML) search. In \cite{OFDM_ICI_TVT_2017}, the ICI effect is eliminated via an all-cell Doppler correction (ACDC) method, which performs Doppler compensation for arbitrary velocities with a precision on the level of Doppler resolution. Following a similar line of reasoning, the work in \cite{lowComp_ICI_2020} develops a sparsity based ICI removal and range-velocity estimation algorithm. For proper functioning of the methods in \cite{OFDM_ICI_TVT_2017,lowComp_ICI_2020} in terms of ICI compensation, a common requirement is that OFDM symbol matrix should be rank-one (i.e., same symbols repeated over time). 

The prior approaches to OFDM radar sensing in the presence of ICI \cite{ICI_OFDM_radar_2015,tian2017high,lellouch2014impact,OFDM_ICI_TVT_2017,ICI_OFDM_TSP_2020,lowComp_ICI_2020} suffer from several major drawbacks that can significantly limit their applicability in practical JRC scenarios. First, most of the existing ICI compensation methods focus exclusively on the radar functionality of OFDM and impose strict constraints on transmit data symbols, such as cyclic shift property \cite{tian2017high}, good correlation characteristics \cite{ICI_OFDM_radar_2015,tian2017high}, identical sequences across subcarriers \cite{lellouch2014impact} and rank-one frequency-time OFDM symbol matrix \cite{OFDM_ICI_TVT_2017,lowComp_ICI_2020}. Such constraints, however, would lead to a substantial loss in communications data rate (e.g., $M$ times reduction in data rate \cite{OFDM_ICI_TVT_2017,lowComp_ICI_2020}, where $M$ is the number of OFDM symbols) and essentially impede dual-functional operation, which is one of the core properties of OFDM. In addition, prior works either focus on a single-target scenario \cite{ICI_OFDM_radar_2015,tian2017high,lellouch2014impact}, which is not realistic, or study multi-target cases with the assumption that targets have already been detected \cite{ICI_OFDM_TSP_2020}, without addressing the problem of \textit{detection} under strong ICI, which is challenging due to high side-lobe levels. Moreover, all the existing studies investigate the ICI effect in single-input single-output (SISO) OFDM radar configurations; hence, potential benefits that can be provided by a MIMO-OFDM architecture in tackling the ICI problem have been unexplored. Finally, exploitation of ICI has been considered only in scenarios with limited practical relevance (e.g., containing a single target, with restricted transmit symbols \cite{Firat_OFDM_2012,ICI_OFDM_radar_2015,tian2017high,lellouch2014impact}). In a general multi-target setting, ICI conveys crucial information on target velocities and must be exploited to enhance radar performance. In light of the existing literature on OFDM radar sensing under the effect of ICI, three fundamental questions arise:
\begin{itemize}
    \item How can ICI effect be \textit{mitigated} in a generic multi-target scenario with arbitrary transmit data symbols, without hampering the communication capability of the OFDM waveform (i.e., no restrictions on data symbols)?
    \item In what ways can ICI be \textit{exploited} to improve radar performance in a multi-target scenario?
    \item How can we best leverage MIMO architectures to design ICI-aware multi-target detection/estimation schemes?
\end{itemize}

With the goal of answering these questions, this paper tackles the problem of radar sensing with MIMO-OFDM DFRC systems in the presence of non-negligible ICI generated by high-speed targets. 
%extending our previous work in \cite{ICI_aware_ICASSP_2021}. 
Motivated by high-mobility mmWave vehicular applications \cite{SPM_JRC_2019}, our goal is to simultaneously \textit{mitigate} and \textit{exploit} ICI associated with multiple targets while retaining the communication functionality of OFDM. Towards that goal, we establish a novel method for ICI-aware sensing in MIMO-OFDM DFRC systems, considering arbitrary transmit symbols in multi-target scenarios. Specifically, we propose an ICI-aware multi-target detection and delay-Doppler-angle estimation algorithm for radar sensing by developing an APES-like spatial filtering approach \cite{Est_MIMO_radar_2008}, coupled with a generalized likelihood ratio test (GLRT) detection procedure. The key idea is to re-formulate radar sensing as a joint carrier frequency offset (CFO)\footnote{Referring to the OFDM communications literature \cite{multiCFO_TWC_2018}, we draw parallelism between ICI-aware sensing and CFO estimation in communications.}  and channel estimation problem, which allows us to decontaminate the ICI effect from the resulting channel estimates, leading to improved performance in target detection and delay-Doppler estimation. Regarding the ICI exploitation aspect, the proposed algorithm enables accurate estimation of any practically relevant unambiguous velocity that well exceeds the standard maximum limit (e.g., in \cite{RadCom_Proc_IEEE_2011,braun2014ofdm}) determined by OFDM symbol duration, thereby effectively \textit{turning ICI from foe to friend}. 
%Hence, we not only mitigate ICI, but also turn it into an advantage for radar. 
%In addition, we derive the CRB on radar parameter estimation in the presence of ICI for performance benchmarking.
% consider the problem of delay-Doppler-angle estimation for a MIMO-OFDM DFRC system with . The main ingredients are \textit{(i)} to leverage the multiple-antenna structure to separate out individual target reflections from the mixed signal, thereby transforming a challenging multi-CFO (carrier frequency offset)\footnote{Borrowing from the OFDM communications literature \cite{multiCFO_TWC_2018}, we use Doppler and CFO interchangeably throughout the text.} compensation problem into several single-CFO compensation problems, and \textit{(ii)} to re-formulate radar sensing as a communication channel estimation problem, which facilitates low-complexity detection/estimation. 
The main contributions can be summarized as follows:
\begin{itemize}
    \item \textbf{Novel Formulation of ICI-aware Sensing as Communication Channel/CFO Estimation Problem:} We establish an insightful duality between ICI-aware sensing in OFDM radar and joint channel/CFO estimation in OFDM communications \cite{multiCFO_TWC_2018,zhang2014blind}. This enables us to formulate the radar delay-Doppler estimation as a joint channel and CFO estimation problem by transforming the delay-Doppler parameterized radar channel into an unstructured form. The key advantage of this novel formulation is that it can produce almost ICI-free channels specific to each target by decoupling ICI compensation from the subsequent delay-Doppler estimation.
    
    %We relate the problem of MIMO-OFDM DFRC in the presence of ICI to a standard unstructured channel estimation problem in the presence of CFO for OFDM communication . 
    
    \item \textbf{ICI-aware Multi-Target Detector/Estimator Design:}  Based on the new problem formulation, we design a three-step ICI-aware multi-target detector/estimator. Observing that spatial covariance matrix does not depend on target delays and Dopplers, we first perform angle estimation using the MUSIC high-resolution direction finding algorithm \cite{MUSIC_1986}. Next, to suppress mutual multi-target interferences \cite{Est_MIMO_radar_2013} in the spatial domain, we propose a spatial filtering approach stemming from \cite{Est_MIMO_radar_2008} that performs joint CFO/channel estimation and beamforming design for each estimated target angle separately. To take into account the presence of multiple targets at a given angle, we devise an orthogonal matching pursuit (OMP) procedure that implements iterative interference cancellation, whereby at each iteration the strongest echo is detected via GLRT and its effect is subtracted from the received signal. As a by-product, the GLRT detector provides the ML estimates (MLEs) of radar channel and CFO associated to the strongest target in the corresponding iteration. In the final step of the algorithm, detection and delay-Doppler estimation can be performed using target-specific channel estimates (decontaminated from ICI) by exploiting the OFDM time-frequency structure.
    
    \item \textbf{ICI Exploitation:} The proposed algorithm exploits the ICI information (obtained as CFO estimates at the output of the second step) in two different ways. First, CFO estimates provide unambiguous target velocities not restricted by OFDM symbol duration, which allows us to resolve velocity ambiguity. Second, we lift the dimension of target resolvability by introducing a fourth dimension (CFO) that allows for distinguishing among targets located at the same delay-Doppler-angle cell. 
    
    %Hence, the proposed approach can convert ICI from foe to friend in high-mobility scenarios, which is of vital importance in vehicular DFRC systems. This also implies that the algorithm introduces a fourth dimension, named Doppler/CFO, to resolve targets that are otherwise unresolvable in delay-Doppler-angle (e.g., targets which have different velocities, but fall into the same Doppler bin due to small unambiguous Doppler).
    
    %\item \textit{Performance Bounds:} For performance benchmarking, we derive the CRB on delay-Doppler-angle estimation in the presence of ICI and provide conditions under which ICI can improve Doppler estimation.
    
    %\item \textit{DFRC Transmit Beamforming Design:} We formulate an optimization problem to design DFRC transmit beamformers that serve both communication and sensing purposes. The resulting beampatterns showcase the inherent trade-offs between radar estimation accuracy and communication rate.
\end{itemize}
Additionally, extensive simulations carried out under a wide range of signal-to-noise ratios (SNRs) and target velocities show that the proposed approach provides substantial performance improvements over the conventional FFT based method \cite{RadCom_Proc_IEEE_2011,braun2014ofdm} and achieves performance very close to that obtained by using ICI-free radar observations, in terms of detection and range-velocity estimation accuracy, which demonstrates its superior ICI suppression capability. Furthermore, illustrative examples are presented to reveal the ICI exploitation property of the proposed algorithm\footnote{Notations: Uppercase (lowercase) boldface letters are used to denote matrices (vectors). $(\cdot)^{*}$, $(\cdot)^{T}$ and $(\cdot)^{H}$ represent conjugate, transpose and Hermitian transpose operators, respectively. The $\thnew{n}$ entry of a vector $\xx$ is denoted as $\left[\xx\right]_i$, while the $\thnew{(m,n)}$ element of a matrix $\boldX$ is $\left[ \boldX \right]_{m,n}$. $\projrange{\boldX} = \boldX (\boldX^H \boldX)^{-1} \boldX^H$ represents the orthogonal projection operator onto the subspace spanned by the columns of $\boldX$ and $\odot$ denotes the Hadamard product. $\boldzero$ is an all-zeros vector and $\Imatrix$ denotes an identity matrix of appropriate size. $\diag{\xx}$ represents a diagonal matrix with the elements of $\xx$ on the diagonals and $\vecc{\cdot}$ denotes vectorization operator. $\floorr{\cdot}$ is the floor function.}.

%Conventional ICI mitigation methods either assume a certain structure of transmit symbols or assume the number of targets is known a-priori. 

% \textit{Notations:}  Uppercase (lowercase) boldface letters are used to denote matrices (vectors). $(\cdot)^{*}$, $(\cdot)^{T}$ and $(\cdot)^{H}$ represent conjugate, transpose and Hermitian transpose operators, respectively. The $\thnew{n}$ entry of a vector $\xx$ is denoted as $\left[\xx\right]_i$, while the $\thnew{(m,n)}$ element of a matrix $\boldX$ is $\left[ \boldX \right]_{m,n}$. $\projrange{\boldX} = \boldX (\boldX^H \boldX)^{-1} \boldX^H$ represents the orthogonal projection operator onto the subspace spanned by the columns of $\boldX$ and $\odot$ denotes the Hadamard product. $\boldzero$ is an all-zeros vector and $\Imatrix$ denotes an identity matrix of appropriate size. $\diag{\xx}$ represents a diagonal matrix with the elements of $\xx$ on the diagonals and $\vecc{\cdot}$ denotes vectorization operator. $\floorr{\cdot}$ is the floor function.

\section{OFDM Radar System Model}
Consider an OFDM DFRC transceiver that communicates with an OFDM receiver while concurrently performing radar sensing using the backscattered signals for target detection \cite{RadCom_Proc_IEEE_2011,DFRC_SPM_2019}. The DFRC transceiver is equipped with an $\Ntx$-element transmit (TX) uniform linear array (ULA) and an $\Nrx$-element receive (RX) ULA.
%This study will focus on target parameter estimation and waveform design for the radar sensing functionality. 
We assume co-located and perfectly decoupled TX/RX antenna arrays so that the radar receiver does not suffer from self-interference due to full-duplex radar operation \cite{Interference_MIMO_OFDM_Radar_2018,RadCom_Proc_IEEE_2011,OFDM_Radar_Phd_2014,80211_Radar_TVT_2018,Adaptive_RadCom_2019}. In this section, we derive OFDM transmit and receive signal models, and formulate the multi-target detection and parameter estimation problem. 

%For extension to multiple antenna settings, consider analog beamforming based single data stream models in \cite{80211_Radar_TVT_2018}, \cite[Eq.~(7)]{mmWave_JRC_TAES_2019}.

%In this section, we first provide OFDM transmit signal model. Then, we derive signal models at the radar and communication receivers, along with the corresponding performance metrics for waveform optimization.     

%\vspace{-0.1in}
\subsection{Transmit Signal Model}\label{sec_transmit}
We consider an OFDM communication frame consisting of $M$ OFDM symbols, each of which has a total duration of $\Tsym = \Tcp + T$ and a total bandwidth of $N \deltaf = B$. Here, $\Tcp$ and $T$ denote, respectively, the cyclic prefix (CP) duration and the elementary symbol duration, $\deltaf = 1/T$ is the subcarrier spacing, and $N$ is the number of subcarriers \cite{RadCom_Proc_IEEE_2011}. Then, the complex baseband transmit signal for the $\thnew{m}$ symbol is given by
\begin{equation}\label{eq_ofdm_baseband}
\sm(t) = \frac{1}{\sqrt{N}} \sum_{n = 0}^{N-1}  \xnm \, e^{j 2 \pi n \deltaf t} \rect{\frac{t - m\Tsym}{\Tsym}}
\end{equation} 
where $\xnm$ denotes the complex data symbol on the $\thnew{n}$ subcarrier for the $\thnew{m}$ symbol \cite{General_Multicarrier_Radar_TSP_2016}, and $\rect{t}$ is a rectangular function that takes the value $1$ for $t \in \left[0, 1 \right]$ and $0$ otherwise. Assuming a single-stream beamforming model \cite{80211_Radar_TVT_2018,mmWave_JRC_TAES_2019,MIMO_OFDM_Single_Stream}, the transmitted signal over the block of $M$ symbols for $t \in \left[0, M \Tsym \right]$ can be written as 
\begin{equation}\label{eq_passband_st}
\Re \left\{ \ftx \sum_{m = 0}^{M-1} \sm(t) e^{j 2 \pi \fc t} \right\}
\end{equation}
where $\fc$ and $\ftx \in \complexset{\Ntx}{1}$ denote, respectively, the carrier frequency and the TX beamforming vector.

\subsection{Receive Signal Model}\label{sec_radar_rec}
Suppose there exists a point target in the far-field, characterized by a complex channel gain $\alpha$ (including path loss and radar cross section effects), an azimuth angle $\theta$, a round-trip delay $\tau$ and a normalized Doppler shift $\nu = 2 v/c$ (leading to a time-varying delay $\tau(t) = \tau - \nu t$), where $v$ and $c$ denote the radial velocity and speed of propagation, respectively.
%\footnote{We assume that the target coherence time and bandwidth encompasses the considered time-frequency region (with the duration $M \Tsym$ and the bandwidth $B$), i.e., the channel gain $h_k$ is time- and frequency-independent \cite{Passive_OFDM_2010,General_Multicarrier_Radar_TSP_2016}. This implies that the block of $M$ OFDM symbols, each with $N$ subcarriers, can be coherently combined for delay-Doppler estimation \cite{Passive_OFDM_2010}.}. 
% Suppose the $\thnew{k}$ target has a round-trip delay $\tau_k$ and a normalized Doppler shift $\nu_k = 2 v_k/c$, leading to a time-varying delay $\tau_k(t) = \tau_k - \nu_k t$, where $v_k$ and $c$ denote the radial velocity and the speed of propagation, respectively.
In addition, let $\atx(\theta) \in \complexset{\Ntx}{1}$ and $\arx(\theta) \in \complexset{\Nrx}{1}$ denote, respectively, the steering vectors of the TX and RX ULAs, i.e.,
%i.e., $\left[ \atx(\theta) \right]_i = e^{j \frac{2 \pi}{\lambda} d i \sin(\theta)}$ and $\left[ \arx(\theta) \right]_i = e^{j \frac{2 \pi}{\lambda} d i \sin(\theta)}$,
\begin{align}
        \atx(\theta) &= \transpose{ \left[ 1, e^{j \frac{2 \pi}{\lambda} d \sin(\theta)}, \ldots,  e^{j \frac{2 \pi}{\lambda} d (\Ntx-1) \sin(\theta)} \right] } ~, \\
        \arx(\theta) &= \transpose{ \left[ 1, e^{j \frac{2 \pi}{\lambda} d \sin(\theta)}, \ldots,  e^{j \frac{2 \pi}{\lambda} d (\Nrx-1) \sin(\theta)} \right] } ~,
\end{align}
where $\lambda$ and $d = \lambda/2$ denote the signal wavelength and antenna element spacing, respectively. Given the transmit signal model in \eqref{eq_passband_st}, the backscattered signal impinging onto the $\thnew{i}$ element of the radar RX array can be expressed as
% \begin{align}\nonumber%\label{eq_rec1}
%     \Re \left\{ \alpha \left[  \arx(\theta) \right]_i \atx^T(\theta)  \ftx \sum_{m = 0}^{M-1} \sm\big(t - \tau(t) \big) e^{j 2 \pi \fc (t - \tau(t))} \right\}
% \end{align}
% which, after downconversion, yields the equivalent baseband signal
\begin{align}\nonumber%\label{eq_rec_bb}
    & y_i(t) = \alpha \left[  \arx(\theta) \right]_i \atx^T(\theta)  \ftx \sum_{m = 0}^{M-1} \sm\big(t - \tau(t)\big) e^{-j 2 \pi \fc \tau} e^{j 2 \pi \fc \nu t }.
\end{align}

We make the following standard assumptions: \textit{(i)} the CP duration is larger than the round-trip delay of the furthermost target\footnote{We focus on small surveillance volumes where the targets are relatively close to the radar, such as vehicular applications.}, i.e., $\Tcp \geq \tau$, \cite{Firat_OFDM_2012,OFDM_Radar_Phd_2014,SPM_JRC_2019}, \textit{(ii)} the Doppler shifts satisfy $ \lvert \nu \rvert \ll 1/N$ \cite{Firat_OFDM_2012,ICI_OFDM_TSP_2020}, and \textit{(iii)} the time-bandwidth product (TBP) $B M \Tsym$ is sufficiently low so that the wideband effect can be ignored, i.e., $\sm(t - \tau(t)) \approx \sm(t - \tau)$ \cite{OFDM_ICI_TVT_2017}. Under this setting, sampling $y_i(t)$ at $t = m\Tsym + \Tcp + \ell T / N$ for $\ell = 0, \ldots, N-1$ (i.e., after CP removal for the $\thnew{m}$ symbol) and neglecting constant terms, the time-domain signal received by the $\thnew{i}$ antenna in the $\thnew{m}$ symbol can be written as \cite{ICI_OFDM_TSP_2020}
\begin{align}\label{eq_rec_bb2}
    y_{i,m}[\ell] &= \alpha \left[  \arx(\theta) \right]_i \atx^T(\theta) \ftx  \, e^{j 2 \pi \fc m \Tsym \nu  } e^{j 2 \pi \fc T \frac{\ell}{N} \nu} \\ \nonumber &~~\times \frac{1}{\sqrt{N}}  \sum_{n = 0}^{N-1}  \xnm \, e^{j 2 \pi n \frac{\ell}{N}} e^{-j 2 \pi n \deltaf \tau} ~.
\end{align}

\subsection{Fast-Time/Slow-Time Representation with ICI}
For the sake of convenience, let us define, respectively, the frequency-domain and temporal steering vectors and the ICI phase rotation matrix as
\begin{align} \label{eq_steer_delay}
	\bb(\tau) & \triangleq  \transpose{ \left[ 1, e^{-j 2 \pi \deltaf \tau}, \ldots,  e^{-j 2 \pi (N-1) \deltaf  \tau} \right] } ~, \\ \label{eq_steer_doppler}
	\cc(\nu) & \triangleq \transpose{ \left[ 1, e^{-j 2 \pi f_c \Tsym \nu }, \ldots,  e^{-j 2 \pi f_c (M-1) \Tsym \nu } \right] } ~, \\ \label{eq_ici_D}
	\DD(\nu) &\triangleq \diag{1, e^{j 2 \pi \fc \frac{T}{N} \nu}, \ldots, e^{j 2 \pi \fc \frac{T(N-1)}{N} \nu} } ~.
\end{align}
Accordingly, the radar observations in \eqref{eq_rec_bb2} can be expressed as
\begin{align} \label{eq_ym}
    \yy_{i,m} &= \alpha \, \left[  \arx(\theta) \right]_i \atx^T(\theta) \ftx  \DD(\nu) \FF_N^{H} \Big(\xx_m \odot \bb(\tau) \left[\cc^{*}(\nu)\right]_m  \Big)  
\end{align}
where $\FF_N \in \complexset{N}{N}$ is the unitary DFT matrix with $\left[ \FF_N \right]_{\ell,n} = \frac{1}{\sqrt{N}} e^{- j 2 \pi n \frac{\ell}{N}} $, $\yy_{i,m} \triangleq \left[ y_{i,m}[0] \, \ldots \, y_{i,m}[N-1] \right]^T$ and $\xx_m \triangleq \left[ x_{0,m} \, \ldots \, x_{N-1,m} \right]^T$.

Aggregating \eqref{eq_ym} over $M$ symbols and considering the presence of multiple targets and noise, the OFDM radar signal received by the $\thnew{i}$ antenna over a frame can be written in a fast-time/slow-time compact matrix form as
\begin{align} \label{eq_ym_all_multi}
    \boldY_i = \sum_{k=0}^{K-1} \alpha^{(i)}_k  \underbrace{\DD(\nu_k)}_{\substack{\rm{ICI} } } \FF_N^{H} \Big(\boldX \odot \bb(\tau_k) \cc^{H}(\nu_k) \Big)  + \boldZ_i
\end{align}
for $i = 0,\ldots, \Nrx-1$, where 
\begin{align}
    \alpha^{(i)}_k &\triangleq \alpha_k \, \left[  \arx(\theta_k) \right]_i \atx^T(\theta_k) \ftx ~, \\
    \boldY_i &\triangleq [ \yy_{i,0} \, \ldots \, \allowbreak \yy_{i,M-1} ] \in \complexset{N}{M} ~, \\ \label{eq_boldX_all}
    \boldX &\triangleq \left[ \xx_0 \, \ldots \, \xx_{M-1} \right] \in \complexset{N}{M} ~,
\end{align}
$\{ \alpha_k, \tau_k, \nu_k, \theta_k \}$ are the parameters of the $\thnew{k}$ target and $\boldZ_i \in \complexset{N}{M}$ is the additive noise matrix with $\vecc{\boldZ_i} \sim \mtCN(\boldzero, \allowbreak \sigma^2 \Imatrix ) $. 

In \eqref{eq_ym_all_multi}, each column contains fast-time samples within a particular symbol and each row contains slow-time samples at a particular range bin. The diagonal phase rotation matrix $\DD(\nu)$ quantifies the ICI effect in fast-time domain, leading to Doppler-dependent phase-shifts across fast-time samples of each OFDM symbol, similar to the CFO effect in OFDM communications \cite{Visa_CFO_TSP_2006,multiCFO_TSP_2019}. With this relation to OFDM communications in mind, the parameter $\nu$ in $\DD(\nu)$ is often referred to as CFO throughout the text to make a clear distinction between the effects induced by $\nu$ via fast-time phase rotations in $\DD(\nu)$ and via slow-time phase rotations in $\cc(\nu)$. To visualize the effect of ICI, Fig.~\ref{fig_ici_comp_range_profile} illustrates the range profile of an OFDM radar, obtained using a standard FFT based method \cite{RadCom_Proc_IEEE_2011,braun2014ofdm}. It is seen that ICI manifests itself in the range profile as increased side-lobe levels, degrading detection performance.

%\footnote{\textcolor{red}{Important note: side-lobe levels depend on both the true velocity and the ambiguous velocity. More precisely, as the ambiguous velocity gets closer to the edge of the ambiguous velocity spectrum, the ICI effect becomes more severe (i.e., ICI-induced side-lobe levels increase).}}.

\begin{figure}
	\centering
    %\vspace{-0.2in}
	\includegraphics[width=1\linewidth]{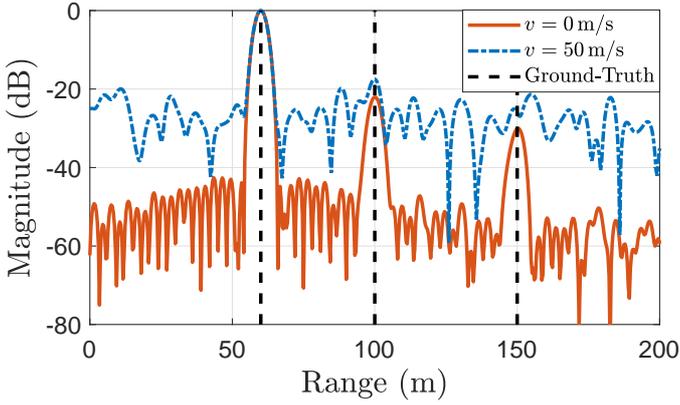}
	\caption{\footnotesize Range profiles of OFDM radar with the parameters given in Table~\ref{tab_parameters} for two different target velocities. The scenario contains $3$ targets having the same velocity $v$, the ranges $(60, 100, 150) \, \rm{m}$, the angles $(25^\circ, 30^\circ, 35^\circ)$ and the SNRs (i.e., $\abs{\alpha_k}^2/\sigma^2$) $(30, 5, 0) \, \rm{dB}$, respectively. In high-mobility scenarios, ICI reduces the dynamic range of OFDM radar due to increased side-lobe levels and causes masking of weak targets.
	}
	\label{fig_ici_comp_range_profile}
	\vspace{-0.1in}
\end{figure}

\subsection{Problem Statement for OFDM Radar Sensing}\label{sec_prob_form}
Given the transmit data symbols $\boldX$, the problem of interest for OFDM radar sensing is to detect the presence of (possibly) multiple targets and estimate their parameters, i.e., channel gains $\{\alpha_k\}_{k=0}^{K-1}$, angles $\{\theta_k\}_{k=0}^{K-1}$, delays $\{\tau_k\}_{k=0}^{K-1}$ and Doppler shifts $\{\nu_k\}_{k=0}^{K-1}$, from the received $\Nrx \times N \times M$ space/fast-time/slow-time data cube $\{\boldY_i\}_{i=0}^{\Nrx-1}$ in \eqref{eq_ym_all_multi}.

\section{ICI-Aware Parameter Estimation via APES Spatial Filtering}\label{sec_ici_aware_apes}
In this section, we propose an ICI-aware delay-Doppler-angle estimation algorithm to tackle the sensing problem formulated in Sec.~\ref{sec_prob_form}. For ease of exposition, we assume the existence of \emph{at most a single target at each azimuth cell}. In Sec.~\ref{sec_glrt_omp}, relying on this approach, we will develop an algorithm that can detect \emph{multiple targets at a given angle} and estimate their parameters. Hence, this section serves as a gentle introduction to the core idea of the paper, which will later be complemented by rigorous detection schemes in Sec.~\ref{sec_glrt_omp}. In the following, we elaborate on the different steps of the proposed algorithm (see Fig.~\ref{fig_stages}).

\begin{figure*}
	\centering
    %\vspace{-0.2in}
	\includegraphics[width=1\linewidth]{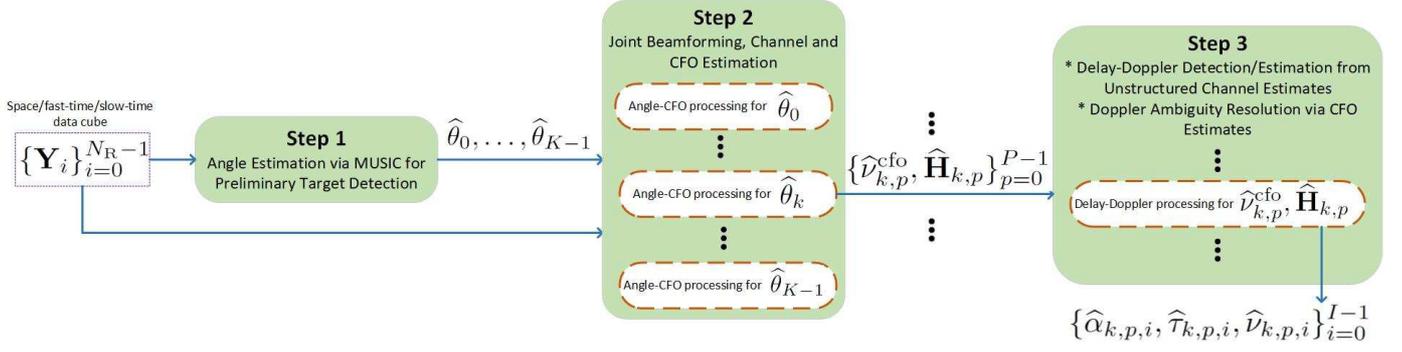}
	\caption{The proposed three-step ICI-aware detector/estimator.}
	\label{fig_stages}
	%\vspace{-0.0in}
\end{figure*}

% We start by pointing out the duality between OFDM communications and OFDM radar in high-mobility settings, which is essential for algorithm development. Then, we present algorithm steps in detail.

% \subsection{Duality Between CFO in OFDM Communications and ICI in OFDM Radar}
% We propose to exploit the duality ...

% \subsection{Outline of the Proposed ICI-Aware Sensing Algorithm}
% The ML approach to \eqref{eq_ym_all_multi} requires a computationally prohibitive $3K$ dimensional search over the parameter space. Additionally, the number of targets $K$ is unknown a-priori. To tackle this challenging estimation problem, we devise a three-stage low-complexity algorithm as outlined in the following.
% \begin{itemize}
%     \item \textit{Angle Estimation:}
%     \item \textit{Joint Beamforming, CFO and Channel Estimation:}
%     \item \textit{Delay-Doppler Estimation:}
% \end{itemize}

\subsection{Step 1: Angle Estimation via MUSIC for Preliminary Target Detection}
In the first step, we wish to identify a set of angles where potential targets may reside so that receive beamformers can be designed accordingly in the subsequent step. For mathematical convenience, we consider the space/fast-time snapshot of the data cube in \eqref{eq_ym_all_multi} corresponding to the $\thnew{m}$ OFDM symbol:
\begin{align}\label{eq_ybbar}
    \Ybbar_m &\triangleq \left[ \yy_{0,m} \, \ldots \, \yy_{\Nrx-1,m}  \right] \in \complexset{N}{\Nrx} \\ \nonumber
    &= \sum_{k=0}^{K-1} \alpha_k \,  \atx^T(\theta_k) \ftx \DD(\nu_k) \FF_N^{H} \diag{\xx_m} \\ \nonumber &~~~~~~~~\times \bb(\tau_k) \left[ \cc^{*}(\nu_k) \right]_m \arx^T(\theta_k)  + \Zbbar_{m}
\end{align}
for $m=0,\ldots,M-1$, where $\Zbbar_{m} \in \complexset{N}{\Nrx}$ is the noise component distributed according to $\vecc{\Zbbar_{m}} \sim \mtCN(\boldzero, \allowbreak \sigma^2 \Imatrix ) $. To distinguish targets with high resolution in the angular domain using small number of antennas (in compliance with mmWave automotive radar requirements \cite{Eldar_SPM_JRC_2020}), we propose to perform angle estimation using the MUSIC algorithm \cite{MUSIC_1986}. To this end, we first construct the spatial covariance matrix (SCM) of the data cube in \eqref{eq_ybbar} as
% \begin{align} \label{eq_capon_obj}
% \mathop{\mathrm{min}}\limits_{\ww} &~~ \ww^T \boldR \wwconj ~~~~~~ \mathrm{s.t.}~~ \ww^H \arx(\theta) = 1 
% \end{align}
% where 
\vspace{-0.05in}
\begin{equation}\label{eq_SCM_obs}
    \boldR \triangleq \sum_{m=0}^{M-1} \Ybbar_m^H \Ybbar_m ~.
\end{equation}
The following lemma provides an approximation of $\boldR$ under certain conditions.
\begin{lemma}\label{lemma_scm}
Let the covariance matrix of data symbols in \eqref{eq_boldX_all} be given by
\begin{align}\label{eq_cov_x}
    \Eee \{ \vecc{\boldX} \vecc{\boldX}^H \} = \sigmax^2 \Imatrix ~.
\end{align}
Assume that $N$ and/or $M$ is sufficiently large and targets are non-overlapping in either delay or Doppler, i.e.,
\begin{align}\label{eq_delayDoppler_separation}
    \bb^H(\tau_{k_1}) \bb(\tau_{k_2}) \approx 0 ~~~ {\rm{or}} ~~~ \cc^H(\nu_{k_1}) \cc(\nu_{k_2}) \approx 0
\end{align}
for any $k_1 \neq k_2$. Then, the SCM in \eqref{eq_SCM_obs} can be modeled as
\begin{align}
    \boldR &= NM \sigmax^2 \sum_{k=0}^{K-1} \beta_k \, \arx^{*}(\theta_k) \arx^T(\theta_k) + N M \sigma^2 \Imatrix ~,
\end{align}
where $\beta_k \triangleq \abs{\alpha_k}^2 \abs{\atx^T(\theta_k)\ftx}^2$.
% which is independent of target delays and Dopplers.
% Under the assumption of spatially non-overlapping targets (i.e., $\arx^H(\theta_{k_1}) \arx(\theta_{k_2}) \approx 0, \, \forall \, k_1 \neq k_2$), 
% \begin{equation}
%     \boldR \approx \norm{\boldX}_F^2 \sum_{k=0}^{K-1} \abs{\alpha_k}^2 \abs{\atx^T(\theta_k)\ftx}^2 \arx^{*}(\theta_k)  \arx^T(\theta_k) + \sigma^2 \Imatrix  
% \end{equation}
\end{lemma}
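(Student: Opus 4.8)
The plan is to substitute the signal-plus-noise model \eqref{eq_ybbar} into the SCM definition \eqref{eq_SCM_obs}, expand the result into target-self (``diagonal'') terms, target-cross terms, and noise terms, and then argue that large $N$ and/or $M$ lets us replace the random sample sums by their expectations, while the separation condition \eqref{eq_delayDoppler_separation} annihilates the cross terms. To compress notation I would write $\Ybbar_m = \sum_{k} a_k\,[\cc^{*}(\nu_k)]_m\,\ggb_{k,m}\,\arx^{T}(\theta_k) + \Zbbar_m$, where $a_k \triangleq \alpha_k\,\atx^{T}(\theta_k)\ftx$ (so that $\abs{a_k}^2 = \beta_k$) and $\ggb_{k,m} \triangleq \DD(\nu_k)\FF_N^{H}\diag{\xx_m}\bb(\tau_k) \in \complexset{N}{1}$. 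Forming $\Ybbar_m^{H}\Ybbar_m$ and summing over $m$ then yields a double sum over $(k_1,k_2)$ in which the scalar weight $\sum_m [\cc(\nu_{k_1})]_m[\cc^{*}(\nu_{k_2})]_m\,\ggb_{k_1,m}^{H}\ggb_{k_2,m}$ multiplies the rank-one angular outer product $\arx^{*}(\theta_{k_1})\arx^{T}(\theta_{k_2})$, plus the signal-noise and noise-noise contributions.

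The diagonal terms $k_1=k_2=k$ are exact and elementary: since $\DD(\nu_k)$ and $\FF_N$ are unitary and $\bb(\tau_k)$, $\cc(\nu_k)$ have unit-modulus entries, $\abs{[\cc(\nu_k)]_m}^2 = 1$ and $\norm{\ggb_{k,m}}^2 = \norm{\diag{\xx_m}\bb(\tau_k)}^2 = \norm{\xx_m}^2$, so the diagonal contribution collapses to $\sum_k \beta_k\,\big(\sum_m \norm{\xx_m}^2\big)\,\arx^{*}(\theta_k)\arx^{T}(\theta_k)$. Invoking \eqref{eq_cov_x} together with a law-of-large-numbers argument for large $NM$, $\sum_m \norm{\xx_m}^2 = \norm{\vecc{\boldX}}^2 \to NM\sigmax^2$, which reproduces the first term of the claim. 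The noise is dispatched the same way: the signal-noise cross terms are zero-mean and independent of the data, hence vanish under concentration, while $\Eee\{\Zbbar_m^{H}\Zbbar_m\} = N\sigma^2\Imatrix$ gives $\sum_m \Zbbar_m^{H}\Zbbar_m \to NM\sigma^2\Imatrix$, the second term.

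The crux is the cross terms $k_1 \neq k_2$, and I would average over the data symbols first. Writing $\ggb_{k_1,m}^{H}\ggb_{k_2,m} = \bb^{H}(\tau_{k_1})\,\diag{\xx_m}^{H}\mathbf{M}\,\diag{\xx_m}\,\bb(\tau_{k_2})$ with $\mathbf{M} \triangleq \FF_N\DD^{H}(\nu_{k_1})\DD(\nu_{k_2})\FF_N^{H}$, the decorrelation $\Eee\{x_{n,m}^{*}x_{n',m}\} = \sigmax^2\delta_{n,n'}$ from \eqref{eq_cov_x} forces only the diagonal of $\mathbf{M}$ to survive, i.e. $\Eee\{\diag{\xx_m}^{H}\mathbf{M}\diag{\xx_m}\} = \sigmax^2\,\ddiag{\mathbf{M}}$. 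Because $[\mathbf{M}]_{n,n} = \frac{1}{N}\sum_\ell e^{\,j 2\pi \fc (T\ell/N)(\nu_{k_2}-\nu_{k_1})}$ is independent of $n$, this collapses to $\Eee\{\ggb_{k_1,m}^{H}\ggb_{k_2,m}\} = \sigmax^2\,[\mathbf{M}]_{0,0}\,\bb^{H}(\tau_{k_1})\bb(\tau_{k_2})$, which is also independent of $m$. The slow-time weight therefore factors out, and each cross term becomes proportional to $\bb^{H}(\tau_{k_1})\bb(\tau_{k_2})\cdot\sum_m [\cc(\nu_{k_1})]_m[\cc^{*}(\nu_{k_2})]_m$, i.e. to the product of the delay correlation and the Doppler correlation $\cc^{H}(\nu_{k_2})\cc(\nu_{k_1})$; vanishing of either factor under \eqref{eq_delayDoppler_separation} kills the term. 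I expect this factorization to be the main obstacle: the ICI matrices $\DD(\nu_k)$ entangle the fast-time samples, and only after data-averaging does the contribution separate cleanly into a delay-correlation times Doppler-correlation product. One must additionally justify—again through large $N$ and/or $M$—that the realized inner products concentrate around these expectations, so that the stated approximation holds for the sample SCM and not merely in the mean.
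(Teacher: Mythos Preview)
Your proposal is correct and follows essentially the same route as the paper's proof: both substitute \eqref{eq_ybbar} into \eqref{eq_SCM_obs}, split into diagonal, cross, and noise terms, handle the diagonal term via the unitary property of $\DD(\nu)$ and $\FF_N$ plus a law-of-large-numbers replacement $\norm{\boldX}_F^2 \to NM\sigmax^2$, and for the cross terms average over the data symbols so that only the (constant) diagonal of $\FF_N\DD(\nu_{k_2}-\nu_{k_1})\FF_N^{H}$ survives, after which the contribution factorizes into $\bb^{H}(\tau_{k_1})\bb(\tau_{k_2})\,\cc^{H}(\nu_{k_2})\cc(\nu_{k_1})$ and vanishes under \eqref{eq_delayDoppler_separation}. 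The paper packages the key cross-term step using the Hadamard identity $\diag{\xx_m}^{H}\mathbf{M}\,\diag{\xx_m} = (\xx_m\xx_m^{H})^{*}\odot\mathbf{M}$ whereas you work entrywise, but the computations are the same.
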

\begin{proof}
    See Sec.~\ref{sec_app_scm} in the supplementary material.
\end{proof}

Based on Lemma~\ref{lemma_scm}, we observe that the SCM of OFDM radar observations in the presence of ICI is independent of target delays and Dopplers, and follows a standard structure that involves a low-rank (rank-$K$) signal covariance term and a scaled diagonal noise covariance component \cite{MUSIC_1986}. Hence, the standard MUSIC algorithm can be applied.
%creating the MUSIC spectrum from the SCM in \eqref{eq_SCM_obs} does not require target delay-Doppler information. 
Assuming $\Nrx > K$, let the eigendecomposition of the SCM be denoted as $\boldR = \boldU_s \boldLambda_s \boldU^H_s + \boldU_n \boldLambda_n \boldU^H_n$, where the diagonal matrix $\boldLambda_s$ contains the $K$ largest eigenvalues, $\boldLambda_n$ contains the remaining $\Nrx-K$ eigenvalues, and $\boldU_s$ and $\boldU_n$ have the corresponding eigenvectors as their columns. Then, the MUSIC spectrum can be computed as 
%The optimal solution to \eqref{eq_capon_obj} can be obtained as $\wwhat = \frac{ (\boldR^{*})^{-1} \arx(\theta) }{ \arx^H(\theta) (\boldR^{*})^{-1} \arx(\theta)   }$, which yields the Capon spatial spectrum
\vspace{-0.1in}
\begin{align} \label{eq_spatial_spectrum}
    f(\theta) &= \frac{ 1  }{   \arx^T(\theta) \boldU_n \boldU_n^H  \arx^{*}(\theta) } ~.
\end{align}
Let $\Smcal = \{ \thetahat_0, \ldots, \thetahat_{K-1} \}$ be the set of estimated angles in Step~1, which correspond to the peaks of the MUSIC spectrum.
%\footnote{To prevent performance degradation at low SNRs due to spurious peaks and misidentification of signal and noise subspaces, improved versions of MUSIC can be employed, e.g., \cite{SSMUSIC_2002}.} in \eqref{eq_spatial_spectrum}.

\subsection{Step 2: Angle-Constrained Joint CFO and Unstructured Channel Estimation via APES Beamforming}\label{sec_apes_beamform}
In Step~2, we formulate a joint CFO and channel estimation problem for each $\thetahat \in \Smcal$ determined in Step 1. Invoking the assumption of spatially non-overlapping targets, we treat interferences from other target components as noise and consider a single-target model in \eqref{eq_ybbar} for each $\thetahat \in \Smcal$. To that aim, let
\begin{equation}\label{eq_H_dec}
    \boldH = \left[ \hh_0 \, \ldots \, \hh_{M-1} \right]  \in \complexset{L}{M}
\end{equation}
denote the unstructured, single-target radar channels in the time domain with $L$ taps, collected over $M$ OFDM symbols. Here, $L \leq N \Tcp / T$ due to the CP requirement. Based on this unstructured representation, \eqref{eq_ybbar} can be re-written as
\begin{align} \label{eq_ym_all_single2}
    \Ybbar_m =  \DD(\nu) \Xbbar_m \hh_m  \arx^T(\thetahat)  + \Zbbar_{m} 
\end{align}
where 
\begin{align}
    \Xbbar_m \triangleq \FF_N^{H} \diag{\xx_m} \FF_{N,L} ~,    
\end{align}
$\FF_{N,L} \in \complexset{N}{L}$ denotes the first $L$ columns of $\FF_N$ and $\Zbbar_{m}$ contains noise and interferences from other targets in $\Smcal$.
% , i.e.,
% \begin{align}
%     \Zbbar_{m} = \sum_{\substack{\theta \in \Smcal \\ \theta \neq \thetahat}} \DD(\nu) \Xbbar_m \hh_m  \arx^T(\thetahat)
% \end{align}
According to \eqref{eq_ybbar}, the frequency-domain radar channels have the form
\begin{equation}\label{eq_H_def}
    \FF_{N,L} \boldH = \alphabar \, \bb(\tau) \cc^{H}(\nu) 
\end{equation}
with $\alphabar \triangleq \alpha \, \atx^T(\thetahat) \ftx$ representing the complex channel gain including the transmit beamforming effect. 

\begin{remark}[Duality Between CFO/Channel Estimation in OFDM Communications and ICI-aware Sensing in OFDM Radar]\label{remark_down_comm}
Based on the observation that radar targets can be interpreted as uncooperative users from a communications perspective (as they transmit information to the radar receiver via reflections in an unintentional manner \cite{jointRadCom_review_TCOM,chiriyath2017radar}), we point out an interesting duality between the OFDM radar signal model with ICI in \eqref{eq_ym_all_single2} and an OFDM communications model with CFO (e.g., \cite[Eq.~(5)]{multiCFO_TWC_2018} and \cite[Eq.~(4)]{zhang2014blind}). Precisely, $\DD(\nu)$ represents CFO between the OFDM transmitter and receiver for a communications setup, while it quantifies the ICI effect due to high-speed targets for OFDM radar. Similarly, $\Xbbar_m$ represents data/pilot symbols for communications and probing signals for radar\footnote{For radar sensing, every symbol acts as a pilot due to dual-functional operation on a single hardware platform.}. In addition, $\hh_m$ represents the time-domain channel for communications and the structured (delay-Doppler parameterized) channel for radar.
%The OFDM radar signal model in \eqref{eq_ym_all_single2} is equivalent to an OFDM downlink communications model (e.g., \cite[Eq.~(5)]{multiCFO_TWC_2018} and \cite[Eq.~(4)]{zhang2014blind}), where the radar target with angle $\thetahat$ and Doppler shift $\nu$ acts as an \textit{uncooperative transmitter} \cite{chiriyath2017radar} that induces a CFO of $\nu$ at the receiver (represented by the matrix $\DD(\nu)$) due to imperfect frequency synchronization. In addition, $\hh_m \arx^T(\thetahat) \in \complexset{L}{\Nrx}$ represents the channel in the time-spatial domain. Given the transformed pilot symbols\footnote{For radar sensing, every symbol acts as a pilot due to dual-functional operation on a single hardware platform.} $\Xbbar_m$, the receiver aims to perform joint CFO and channel estimation from $\Ybbar_m$ in \eqref{eq_ym_all_single2} at a given angle $\thetahat$.
\end{remark}

In light of Remark~\ref{remark_down_comm}, we re-formulate the radar delay-Doppler estimation problem as a communication channel estimation problem, where the objective is to jointly estimate the unstructured time-domain channels $\boldH$ and the CFO $\nu$ from \eqref{eq_ym_all_single2}. To perform channel estimation in \eqref{eq_ym_all_single2}, we propose an APES-like beamformer \cite{Est_MIMO_radar_2008}
\begin{align} \label{eq_apes}
\mathop{\mathrm{min}}\limits_{\ww, \boldH, \nu} &~~ \sum_{m=0}^{M-1}
\normbig{ \Ybbar_m \wwconj -  \DD(\nu) \Xbbar_m \hh_m  }^2  \\ \nonumber
\mathrm{s.t.}&~~ \ww^H \arx(\thetahat) = 1 ~,
\end{align}
where $\ww \in \complexset{\Nrx}{1}$ is the APES spatial beamforming vector for an estimated angle $\thetahat \in \Smcal$. The rationale behind the proposed cost function in \eqref{eq_apes} is to design the beamformer such that the resulting observations $ \big\{ \Ybbar_m \wwconj \big\}_{m=0}^{M-1} $ are as close as possible to the noiseless part of the received signal in \eqref{eq_ym_all_single2}, i.e., $ \big\{ \DD(\nu) \Xbbar_m \hh_m  \arx^T(\thetahat) \wwconj \big\}_{m=0}^{M-1}$. The optimal channel estimate for the $\thnew{m}$ symbol in \eqref{eq_apes} for a given $\ww$ and $\nu$ is given by
\begin{equation}\label{eq_hm_est}
    \hhhat_m = \Big( \Xbbar_m^H \Xbbar_m \Big)^{-1} \Xbbar_m^H \DD^H(\nu) \Ybbar_m \wwconj ~.
\end{equation}
Plugging \eqref{eq_hm_est} back into \eqref{eq_apes} yields
\begin{align} \label{eq_apes_sec3}
\mathop{\mathrm{min}}\limits_{\ww, \nu} &~~ \ww^T \boldQ(\nu) \wwconj ~~~~~~ \mathrm{s.t.}~~ \ww^H \arx(\thetahat) = 1 ~,
\end{align}
where
\vspace{-0.1in}
\begin{align} \label{eq_residual_SCM_CFO}
    \boldQ(\nu) \triangleq \sum_{m=0}^{M-1} \Ybbar_m^H \DD(\nu) \projnull{\Xbbar_m} \DD^H(\nu) \Ybbar_m 
\end{align}
is the null space SCM as a function of CFO, i.e., the SCM of the CFO compensated observations projected onto the null space of the pilot matrices.
% Should we say null-space SCM instead of residual SCM, to differentiate from OMP residuals?
%Based on \eqref{eq_q_covariance}--\eqref{eq_cfo_covariance}, the rationale behind the optimization problem in \eqref{eq_apes2} is to jointly solve for a beamformer and a CFO that minimizes the residual energy at the CFO-compensated beamforming output. 
For a given CFO $\nu$, the optimal beamformer in \eqref{eq_apes_sec3} can be obtained in closed-form as \cite{Est_MIMO_radar_2008}
\begin{equation}\label{eq_what_sec3}
    \wwhat = \frac{ \boldQ^{*}(\nu)^{-1} \arx(\thetahat) }{ \arx^H(\thetahat) \boldQ^{*}(\nu)^{-1} \arx(\thetahat)  }~.
\end{equation}
Substituting \eqref{eq_what_sec3} into \eqref{eq_apes_sec3}, the CFO can be estimated as
\begin{equation}\label{eq_nuhat}
    \nuhatcfo = \arg \max_{\nu} ~~ \arx^H(\thetahat) \boldQ^{*}(\nu)^{-1} \arx(\thetahat) ~.
\end{equation}
Finally, plugging \eqref{eq_what_sec3} and \eqref{eq_nuhat} into \eqref{eq_hm_est}, the channel estimates can be expressed as
\begin{equation}\label{eq_hm_est2}
    \hhhat_m = \frac{ \Big( \Xbbar_m^H \Xbbar_m \Big)^{-1} \Xbbar_m^H \DD^H(\nuhatcfo) \Ybbar_m  \boldQ(\nuhatcfo)^{-1} \arx^{*}(\thetahat) }{ \arx^T(\thetahat) \boldQ(\nuhatcfo)^{-1} \arx^{*}(\thetahat)   } ~.
\end{equation}
The outputs of Step~2 are the CFO estimate $\nuhatcfo$ in \eqref{eq_nuhat} and the channel estimates $\boldHhat \triangleq \left[ \hhhat_0, \ldots, \hhhat_{M-1} \right]$ in \eqref{eq_hm_est2}.
% \begin{align}\label{eq_hm_est2_all}
%     \boldHhat \triangleq \left[ \hhhat_0, \ldots, \hhhat_{M-1} \right] ~.
% \end{align}

\subsection{Step 3: Angle-Constrained Delay-Doppler Recovery from Unstructured Channel Estimates}\label{sec_step3_single}
The Step~3 of the proposed algorithm consists of two substeps, as detailed in the following.

\subsubsection{Delay-Doppler Estimation from $\boldHhat$}
Given the unstructured channel estimates in \eqref{eq_hm_est2}, we aim to estimate channel gain, delay and Doppler shift via a least-squares (LS) approach by exploiting the structure in \eqref{eq_H_def} as follows:
\begin{align} \label{eq_apes_step2}
\mathop{\mathrm{min}}\limits_{\alpha, \tau, \nu} &~~ 
\normbig{ \FF_{N,L} \boldHhat -  \alphabar \, \bb(\tau) \cc^H(\nu)  }_F^2 ~.
\end{align}
In \eqref{eq_apes_step2}, delay and Doppler estimates $\tauhat$ and $\nuhat$ can be obtained simply via 2-D FFT, i.e.,
\begin{align}\label{eq_2d_fft_step3}
    (\tauhat, \nuhat) = \arg \max_{\tau, \nu} ~~ \absbig{\bb^H(\tau) \FF_{N,L} \boldHhat \cc(\nu)}^2 ~,
\end{align}
where $\bb(\tau)$ in \eqref{eq_steer_delay} and $\cc(\nu)$ in \eqref{eq_steer_doppler} correspond to DFT matrix columns for a uniform delay-Doppler grid sampled at integer multiples of delay-Doppler resolutions. From \eqref{eq_2d_fft_step3}, channel gain can be estimated as
\begin{align}\label{eq_2d_fft_gain}
    \widehat{\alphabar} = \frac{\bb^H(\tauhat) \FF_{N,L} \boldHhat  \cc(\nuhat)}{ \norm{\bb(\tauhat)}^2 \norm{\cc(\nuhat)}^2 } ~.
\end{align}
% (i.e., IFFT and FFT across the columns and rows of $\FF_{N,L}  \boldHhat$, respectively). 
%Then, channel gain can be estimated as $\alphabarhat =  \bb^H(\tauhat) \FF_{N,L} \boldHhat  \cc(\nuhat)  /  \normsmall{ \boldHhat }_F^2$. 

\subsubsection{Doppler Ambiguity Resolution via $\nuhatcfo$}
By using the CFO estimate $\nuhatcfo$, we can resolve ambiguity in the Doppler estimate $\nuhat$ in \eqref{eq_2d_fft_step3}. Notice from \eqref{eq_residual_SCM_CFO} and \eqref{eq_nuhat} that $\nuhatcfo$ is estimated based on fast-time phase rotations of $\DD(\nu)$ in \eqref{eq_ici_D}, which implies that the maximum unambiguous CFO that can be estimated via $\DD(\nu)$ is
\vspace{-0.05in}
\begin{align}
    \numaxcfo = \pm \frac{N}{2\fc T}~.
\end{align}
On the other hand, the maximum unambiguous Doppler estimated in \eqref{eq_2d_fft_step3} from slow-time phase rotations of $\cc(\nu)$ in \eqref{eq_steer_doppler} is given by
\vspace{-0.1in}
\begin{align}
    \numax = \pm \frac{1}{2 \fc \Tsym}~,
\end{align}
which is approximately $N$ times smaller than $\numaxcfo$, assuming $\Tcp$ is small compared to $T$. Hence, the ambiguity in $\nuhat$ can be resolved by using $\nuhatcfo$ as
\begin{align}\label{eq_resolve_ambiguity_cfo}
    \nuhat \leftarrow \nuhat + 2 \abs{\numax} \floor{ \frac{\nuhatcfo + \abs{\numax}  }{ 2 \abs{\numax} } } ~.
\end{align}

Referring to the unstructured ML (UML) type methods \cite{swindlehurst1998time,fascista2019millimeter}, we name the proposed algorithm APES-UML, which is summarized in Algorithm~\ref{alg_apes}.

\begin{algorithm}
	\caption{APES-UML for ICI-Aware Sensing with MIMO-OFDM Radar}
	\label{alg_apes}
% 	\footnotesize
	\begin{algorithmic}[1]
	    \State \textbf{Input:} Space/fast-time/slow-time data cube $\{\boldY_i\}_{i=0}^{\Nrx-1}$ in \eqref{eq_ym_all_multi}.
	    \State \textbf{Output:} Delay-Doppler-angle-gain estimates of multiple targets $\{ \tauhat_k, \nuhat_k, \thetahat_k, \alphahat_k \}_{k=0}^{K-1}$.
	    \State \textbf{Step~1:} 
	    \begin{enumerate}[label=(\alph*)]
	        \item Estimate target angles by identifying the peaks in the MUSIC spatial spectrum in \eqref{eq_spatial_spectrum}.
	        \end{enumerate}
	    \State \textbf{Step~2:} For each estimated angle $\thetahat$: 
	    %\State \textit{Single-target assumption:}
	    \Indent 
	    \begin{enumerate}[label=(\alph*)]
	        \item Estimate the CFO $\nuhatcfo$ via \eqref{eq_nuhat}. 
	        \item Estimate the time-domain channels $\boldHhat$ via \eqref{eq_hm_est2}.
	    \end{enumerate}
	    \EndIndent
        %\State \textit{Multiple-target assumption:}
        %\begin{enumerate}[label=(\alph*)]
         %   \item Use Algorithm~\ref{alg_apes_glrt}.
        %\end{enumerate}
        
	    \State \textbf{Step~3:} For each estimated angle $\thetahat$:
	    %\State \textit{Single-target assumption:}
	    \Indent
	    \begin{enumerate}[label=(\alph*)]
	        \item Estimate delay-Doppler-gain from the unstructured channel estimates $\boldHhat$ via \eqref{eq_2d_fft_step3} and \eqref{eq_2d_fft_gain}.
	        \item Resolve Doppler ambiguity using $\nuhatcfo$ via \eqref{eq_resolve_ambiguity_cfo}.
	    \end{enumerate}
	    \EndIndent
	    %\State \textit{Multiple-target assumption:}
        %\begin{enumerate}[label=(\alph*)]
        %    \item Use Algorithm~\ref{alg_glrt_delayDoppler}.
        %\end{enumerate}
	    
	\end{algorithmic}
	\normalsize
\end{algorithm}

\section{ICI-Aware Detector/Estimator Design via GLRT and OMP}\label{sec_glrt_omp}
In this section, we extend the APES-UML algorithm proposed in Algorithm~\ref{alg_apes} to the case where multiple targets can be present at an azimuth cell. To accomplish multiple target detection/estimation at a given azimuth angle, we devise an OMP based iterative interference cancellation algorithm using a GLRT detector at each iteration. As we will show, the resulting algorithm will involve replacing in Algorithm \ref{alg_apes}, line 4 with Algorithm~\ref{alg_apes_glrt} and line 5 with Algorithm~\ref{alg_glrt_delayDoppler}. % \textcolor{red}{We will replace Line~1, Step~2 with Algorithm~2 and Line~7, Step~3 with Algorithm~3.}

\subsection{GLRT for Detection of Multiple Targets at the Same Angle}
Algorithm~\ref{alg_apes} assumes the existence of a single target at each angle estimated in Step~1. To account for the existence of multiple targets at a given angle, the Step~2 of the algorithm can be modified to detect multiple peaks in the CFO spectrum in \eqref{eq_nuhat}. To that end, we design a GLRT detector that, for each $\thetahat \in \Smcal$, operates on the fast-time/slow-time observations $\{ \Ybbar_m^{(p)} \wwconj \}_{m=0}^{M-1}$ obtained by projection of the data cube $\{ \Ybbar_m^{(p)} \}_{m=0}^{M-1}$ onto the fast-time/slow-time domain using a receive beamformer $\ww$ pointing towards $\thetahat$. Here, $\{ \Ybbar_m^{(p)} \}_{m=0}^{M-1}$ denotes the \textit{residue} at the $\thnew{p}$ iteration of the OMP based interference cancellation procedure (whose details will be given in Sec.~\ref{sec_omp_step2}), with the initialization $\Ybbar_m^{(0)} = \Ybbar_m$, where $\Ybbar_m$ is defined in \eqref{eq_ym_all_single2}.
%\footnote{Here, $\Ybbar_m$ denotes the residual at the current iteration of OMP, not the entire data.}. 
Our goal is to detect the strongest echo at the $\thnew{p}$ iteration and subtract its effect from the current residue $\{ \Ybbar_m^{(p)} \}_{m=0}^{M-1}$. Accordingly, the hypothesis testing problem at the $\thnew{p}$ iteration can be formulated using \eqref{eq_ym_all_single2} as
\begin{align}\label{eq_hypotest}
    \yy = \begin{cases}
	\zz,&~~ {\rm{under~\mathcal{H}_0}}  \\
	\mub(\ww, \boldH, \nu) + \zz,&~~ {\rm{under~\mathcal{H}_1}} 
	\end{cases}
\end{align}
where
\begin{subequations}\label{eq_y_mu}
\begin{align}
\yy &\triangleq \begin{bmatrix} \Ybbar_0^{(p)}  \\ \vdots \\ \Ybbar_{M-1}^{(p)} \end{bmatrix} \wwconj \in \complexset{NM}{1} ~, \\
    \mub(\ww, \boldH, \nu) &\triangleq \begin{bmatrix} \DD(\nu) \Xbbar_0 \hh_0  \\ \vdots \\ \DD(\nu) \Xbbar_{M-1} \hh_{M-1}   \end{bmatrix} \arx^T(\thetahat) \wwconj \in \complexset{NM}{1} ~, \\
    \zz &\triangleq \begin{bmatrix} \Zbbar_{0}  \\ \vdots \\ \Zbbar_{M-1}  \end{bmatrix} \wwconj \in \complexset{NM}{1} ~,
\end{align}
\end{subequations}
with $\mathcal{H}_0$ and $\mathcal{H}_1$ denoting the absence and presence of a target at angle $\thetahat$.

For the composite hypothesis testing problem in \eqref{eq_hypotest} with the unknowns $\ww$, $\boldH$ and $\nu$, the GLRT can be written as
\begin{equation}\label{eq_glrt}
    \llr(\yy) = \frac{ \max_{\ww \in \wtheta, \boldH, \nu} p(\yy \, \lvert \, \mathcal{H}_1 ; \ww, \boldH, \nu ) }{\max_{\ww \in \wtheta} p(\yy \, \lvert \, \mathcal{H}_0 ; \ww ) } \hdet \etatilde
\end{equation}
for some threshold $\etatilde$, where the spatial beamformer steered towards $\thetahat$ is constrained to lie in the set
\begin{align}
    \wtheta = \{ \ww \in \complexset{\Nrx}{1} ~ \lvert ~ \ww^H \arx(\thetahat) = 1 \}~.
\end{align}
Assuming $\zz \sim \mtCN(\boldzero, \sigmabar^2 \Imatrix ) $, the GLRT in \eqref{eq_glrt} takes the form
\begin{equation}\label{eq_glrt2}
    \llr(\yy) = \frac{ \exp\left(- \frac{1}{\sigmabar^2} \min_{\ww \in \wtheta, \boldH, \nu} \norm{  \yy -  \mub(\ww, \boldH, \nu)  }^2 \right)  }{\exp\left(- \frac{1}{\sigmabar^2} \min_{\ww \in \wtheta}   \norm{  \yy }^2  \right) } \hdet \widetilde{\eta} ~.
\end{equation}
By plugging \eqref{eq_y_mu} into \eqref{eq_glrt2} and taking the log, we have
\begin{align}\label{eq_glrt3}
    \llrlog(\yy) &= \frac{1}{\sigmabar^2} \min_{\ww \in \wtheta} \sum_{m=0}^{M-1}
\normbig{ \Ybbar_m^{(p)} \wwconj  }^2 \\ \nonumber & - \frac{1}{\sigmabar^2} \min_{\ww \in \wtheta, \boldH, \nu} \sum_{m=0}^{M-1}
\normbig{ \Ybbar_m^{(p)} \wwconj -  \DD(\nu) \Xbbar_m \hh_m  }^2 \hdet \eta ~,
\end{align}
where $\llrlog(\yy) \triangleq \log \llr(\yy)$ and $\eta \triangleq \log \etatilde$. 

We are now faced with two separate optimization problems to derive the GLRT detector in \eqref{eq_glrt3}. The first problem in \eqref{eq_glrt3} can be re-written as
\vspace{-0.05in}
\begin{align} \label{eq_glrt3_capon}
\mathop{\mathrm{min}}\limits_{\ww \in \wtheta} &~~ \ww^T \boldR^{(p)} \wwconj  ~,
\end{align}
where
\vspace{-0.1in}
\begin{align}
    \boldR^{(p)} \triangleq \sum_{m=0}^{M-1} \big(\Ybbar_m^{(p)}\big)^H \Ybbar_m^{(p)} 
\end{align}
is the SCM of the residue at the $\thnew{p}$ iteration. The problem in \eqref{eq_glrt3_capon} represents a Capon beamforming problem \cite{capon1969high} with the optimal objective value
\begin{align}\label{eq_capon_1st_comp}
     \mathop{\mathrm{min}}\limits_{\ww \in \wtheta} &~~ \ww^T \boldR^{(p)} \wwconj = \frac{1}{\arx^H(\thetahat) \big[ \big(\boldR^{(p)}\big)^{*} \big]^{-1} \arx(\thetahat)} ~.
\end{align}
Regarding the second problem in \eqref{eq_glrt3}, it corresponds to the same APES beamforming problem as investigated in \eqref{eq_apes}. Hence, using the same steps as in \eqref{eq_nuhat}, the optimal CFO for the second optimization in \eqref{eq_glrt3} can be obtained as
\begin{equation}\label{eq_nuhat_grlt3}
    \nuhatcfo_p = \arg \max_{\nu} ~~ \arx^H(\thetahat) \big[ \big(\boldQ^{(p)}(\nu)\big)^{*} \big]^{-1} \arx(\thetahat) ~,
\end{equation}
where 
\vspace{-0.1in}
\begin{align} \label{eq_residual_SCM_CFO_grlt3}
    \boldQ^{(p)}(\nu) \triangleq \sum_{m=0}^{M-1} \big(\Ybbar_m^{(p)}\big)^H \DD(\nu) \projnull{\Xbbar_m} \DD^H(\nu)  \Ybbar_m^{(p)} 
\end{align}
is the null space SCM of the residue at the $\thnew{p}$ iteration. Then, the optimal objective value of the second term in \eqref{eq_glrt3} is given by
\begin{align}\nonumber
    & \min_{\ww \in \wtheta, \boldH, \nu} \sum_{m=0}^{M-1}
\normbig{ \Ybbar_m^{(p)} \wwconj -  \DD(\nu) \Xbbar_m \hh_m  }^2 \\ \label{eq_apes_2nd_comp}
&~~~~~~~~~~~~~~ = \frac{1}{ \arx^H(\thetahat) \big[ \big(\boldQ^{(p)}(\nuhatcfo_p)\big)^{*} \big]^{-1} \arx(\thetahat)  }~.
\end{align}
Finally, inserting \eqref{eq_capon_1st_comp} and \eqref{eq_apes_2nd_comp} into \eqref{eq_glrt3}, the GLRT becomes\footnote{Notice that $\boldR^{(p)} \succeq \boldQ^{(p)}(\nu) $ is satisfied for any $\nu$ since $\boldR^{(p)} = \boldQ^{(p)}(\nu) + \sum_{m=0}^{M-1} \big(\Ybbar_m^{(p)}\big)^H \DD(\nu) \projrange{\Xbbar_m} \DD^H(\nu)  \Ybbar_m^{(p)}$.}
\begin{align}\label{eq_glrt_single}
     \llrlog(\yy) &= \frac{1/\sigmabar^2}{\arx^H(\thetahat) \big[ \big(\boldR^{(p)}\big)^{*} \big]^{-1} \arx(\thetahat)} \\ \nonumber & ~~~~-
      \frac{1/\sigmabar^2}{ \max_{\nu} ~ \arx^H(\thetahat) \big[ \big(\boldQ^{(p)}(\nu)\big)^{*} \big]^{-1} \arx(\thetahat)  } \hdet \eta .
\end{align}

As a summary of the detection part at the $\thnew{p}$ iteration, we perform detection using the GLRT in \eqref{eq_glrt_single} and, if the threshold $\eta$ is crossed, obtain as a by-product the CFO estimate $\nuhatcfo_p$ in \eqref{eq_nuhat_grlt3} associated to the strongest target in the current residue.  
\vspace{-0.1in}

\subsection{OMP for Iterative Interference Cancellation}\label{sec_omp_step2}
Suppose $P$ targets have already been detected at angle $\thetahat$ via GLRT in the previous $P$ iterations, with the corresponding CFO estimates $\{ \nuhatcfo_p \}_{p=0}^{P-1}$. Following an OMP-like procedure \cite{mallat1993matching}, we first update the channel estimates of the $P$ targets detected so far by solving the following optimization problem:
\begin{align} \label{eq_apes_multiple}
\mathop{\mathrm{min}}\limits_{\ww \in \wtheta, \{ \boldH_p \}_{p=0}^{P-1}} &~~ \sum_{m=0}^{M-1}
\norm{ \Ybbar_m \wwconj - \sum_{p=0}^{P-1}  \DD(\nuhatcfo_p) \Xbbar_m \hh_{m,p}  }^2 
\end{align}
where 
\begin{equation}\label{eq_H_dec_p}
    \boldH_p = \left[ \hh_{0,p} \, \ldots \, \hh_{M-1,p} \right]  \in \complexset{L}{M}
\end{equation}
is the channel matrix of the $\thnew{p}$ target. The motivation for the formulation in \eqref{eq_apes_multiple} is to jointly estimate the channels of multiple targets located at angle $\thetahat$ given their CFO parameters by generalizing the APES beamforming problem in \eqref{eq_apes}. Let us define
\begin{align}
    \hhtilde_m^{(P)} &\triangleq \left[ \hh_{m,0}^T \, \ldots \, \hh_{m,P-1}^T \right]^T \in \complexset{LP}{1} \\ \label{eq_phimp}
    \boldPhi_m^{(P)} &\triangleq \left[ \DD(\nuhatcfo_0) \Xbbar_m \, \ldots \, \DD(\nuhatcfo_{P-1}) \Xbbar_m \right] \in \complexset{N}{LP}
\end{align}
for $m = 0, \ldots, M-1$. Note that $\boldPhi_m^{(P)}$ represents the current atom set constructed using the CFO estimates of the detected targets. The problem in \eqref{eq_apes_multiple} can now be written as
\begin{align} \label{eq_apes_multiple2}
\mathop{\mathrm{min}}\limits_{\ww \in \wtheta, \{ \boldH_p \}_{p=0}^{P-1}} &~~ \sum_{m=0}^{M-1}
\norm{ \Ybbar_m \wwconj - \boldPhi_m^{(P)} \hhtilde_m^{(P)}  }^2  ~.
\end{align}
Following similar steps to those in \eqref{eq_apes}--\eqref{eq_hm_est2}, the channel estimates can be obtained from \eqref{eq_apes_multiple2} in closed form as
\begin{align}\label{eq_hhtildehat2}
    \hhtildehat_m^{(P)} = \frac{ \Big[\big(\boldPhi_m^{(P)}\big)^H \boldPhi_m^{(P)} \Big]^{-1} \big(\boldPhi_m^{(P)}\big)^H  \Ybbar_m  \big[ \boldQ^{(P)} \big]^{-1} \arx^{*}(\thetahat) }{ \arx^T(\thetahat) \big[ \boldQ^{(P)} \big]^{-1} \arx^{*}(\thetahat)   }
\end{align}
for $m = 0,  \ldots,  M-1$, where 
\begin{align}
    \boldQ^{(P)} \triangleq \sum_{m=0}^{M-1} \Ybbar_m^H  \projnull{\boldPhi_m^{(P)}}  \Ybbar_m ~.
\end{align}
% \begin{align}\label{eq_hhtildehat}
%     \hhtildehat_m^{(P)} = \Big[\big(\boldPhi_m^{(P)}\big)^H \boldPhi_m^{(P)} \Big]^{-1} \big(\boldPhi_m^{(P)}\big)^H \Ybbar_m \wwconj ~.
% \end{align}
To ensure linear independence of the columns of $\boldPhi_m^{(P)}$ in \eqref{eq_phimp}, we make the sparsity assumption $P \leq N/L$, i.e., the number of targets at an azimuth cell with distinct CFO values does not exceed $N/L$. This is a typical sparse scene assumption in radar and holds true in general since $N/L = T/\Tcp \gg 1$ for OFDM. Based on the updated channel estimates in \eqref{eq_hhtildehat2}, the residue at the end of the $\thnew{(P-1)}$ iteration is obtained as
\begin{align}\label{eq_residual}
    \Ybbar_m^{(P)} = \Ybbar_m - \boldPhi_m^{(P)} \hhtildehat_m^{(P)} \arx^T(\thetahat) 
\end{align}
for $m = 0,  \ldots,  M-1$. 
% Inserting \eqref{eq_hhtildehat} into \eqref{eq_apes_multiple2} yields
% \begin{align} \label{eq_apes2}
% \mathop{\mathrm{min}}\limits_{\ww, \nu} &~~ \ww^T \boldQ_P \wwconj ~~~~~~ \mathrm{s.t.}~~ \ww^H \arx(\thetahat) = 1 ~,
% \end{align}
% where

% represents the residual SCM at the $\thnew{P}$ iteration (initialized by $\Ybbar_m^{\text{res}}=\Ybbar_m$). The optimal $\ww$ is readily obtained as
% \begin{equation}\label{eq_what}
%     \wwhat = \frac{ (\boldQ_P^{*})^{-1} \arx(\thetahat) }{ \arx^H(\thetahat) (\boldQ_P^{*})^{-1} \arx(\thetahat)  }~,
% \end{equation}
% which leads to the updated channel estimates in \eqref{eq_hhtildehat}:

As a summary of the OMP based update part, the channel estimates of the $P$ targets detected so far are updated via \eqref{eq_hhtildehat2} and the residue to be used as input for GLRT based detection at the next iteration is computed using \eqref{eq_residual}. The overall algorithm involving GLRT and OMP steps is summarized in Algorithm~\ref{alg_apes_glrt}.

\begin{algorithm}
	\caption{Joint CFO and Radar Channel Estimation with OMP Based Iterative Interference Cancellation}
	\label{alg_apes_glrt}
% 	\footnotesize
	\begin{algorithmic}[1]
	    \State \textbf{Input:} Space/fast-time/slow-time data cube $\{\boldY_i\}_{i=0}^{\Nrx-1}$ in \eqref{eq_ym_all_multi}, angle $\thetahat$, maximum number of targets $\Pmax$.
	    \State \textbf{Output:} CFOs and time-domain channel estimates of multiple targets $\{ \nuhatcfo_{p}, \boldHhat_{p} \}_{p=0}^{P-1}$.
	    \State \textbf{Initialization:} Set $P=0$, $\mathcal{A} = \varnothing$ and $\boldPhi_m^{(-1)} = \left[ ~  \right]$.
	    \State \textbf{while} $P < P_{\rm{max}}$
	    \Indent
	    \State Compute $\llrlog(\yy)$ in \eqref{eq_glrt_single}.
	    \State \textbf{if} $\llrlog(\yy) > \eta$
	    \Indent
	    \State Update detected CFOs: $\mathcal{A} \leftarrow \mathcal{A} \cup \{ \nuhatcfo_P \}$.
	    \State Update the atom set: 
	    \vspace{-0.08in}
	    \begin{align}\nonumber
	        \boldPhi_m^{(P)} \leftarrow \left[ \boldPhi_m^{(P-1)} \,  \DD(\nuhatcfo_P) \Xbbar_m \right] ~.
	    \end{align}
	    \State \parbox[t]{210pt}{Update channel estimates of the $P$ targets detected so far based on the updated set of atoms via \eqref{eq_hhtildehat2}.\strut}
	    \State Set $P = P + 1$.
	    %\vspace{0.05in}
	    \State Update the residual via \eqref{eq_residual}.
	    \EndIndent
	    \State \textbf{else}
	    \Indent 
	    % \State Set the number of targets as $\Pbarr = P$.
	    \State \textbf{break}
	    \EndIndent
	    \State \textbf{end if}
	    \EndIndent
	    \State \textbf{end while}
	\end{algorithmic}
	\normalsize
\end{algorithm}

\subsection{GLRT for Detection of Multiple Targets at the Same Angle-CFO Cell}
The previous two subsections focus on Step~2 of Algorithm~\ref{alg_apes} and develop an OMP based interference cancellation procedure to detect multiple targets at a given angle. Similar to Step~2, multiple targets may exist at a given angle-CFO cell, i.e., each channel estimate $\boldHhat$ at the output of Algorithm~\ref{alg_apes_glrt} can be a superposition of echoes of multiple targets at the same CFO $\nuhatcfo$, but with different delays. To handle this case, we propose an extension to Step~3 of Algorithm~\ref{alg_apes} by using a GLRT approach similar to Algorithm~\ref{alg_apes_glrt}.

Suppose that a channel estimate and CFO pair $\{ \nuhatcfo, \boldHhat \}$ is obtained at the output of Algorithm~\ref{alg_apes_glrt}. Based on the structure in \eqref{eq_H_def}, the frequency-domain radar channels in the presence of multiple targets can be modeled as
\begin{align}\label{eq_omp_step3}
    \hhatfs \triangleq \FF_{N,L} \boldHhat = \sum_{i=0}^{I-1} \alpha_{i} \, \bb(\tau_{i}) \cc^{H}(\nu_{i}) + \boldZ ~,
\end{align}
where $\hhatfs \in \complexset{N}{M}$ represents channel estimates in frequency/slow-time domain, $\boldZ \in \complexset{N}{M}$ is the noise term with $\vecc{\boldZ} \sim \mtCN(\boldzero, \allowbreak \sigmatilde^2 \Imatrix ) $ and $I$ is the number of targets located at an angle-CFO cell $(\thetahat, \nuhatcfo)$, with the corresponding delay-Doppler-gain parameters $\{ \tau_{i}, \nu_{i}, \alpha_{i} \}_{i = 0}^{I-1}$. Following a similar approach to Algorithm~\ref{alg_apes_glrt}, we focus on the hypothesis testing problem to test the presence of a single target in \eqref{eq_omp_step3}
\begin{align}\label{eq_hypotest_delayDoppler}
    \hhatfs = \begin{cases}
	\boldZ,&~~ {\rm{under~\mathcal{H}_0}}  \\
	\alpha \, \bb(\tau) \cc^{H}(\nu) + \boldZ,&~~ {\rm{under~\mathcal{H}_1}} 
	\end{cases}~,
\end{align} 
which leads to the GLRT with unknowns $\alpha$, $\tau$ and $\nu$:
% In the ideal noise-free case, one would expect $\nu_{p_2} = \nuhat_p$ for every $p_2$. 
%Since the accuracy of CFO estimation is lower than the Doppler estimation due to much shorter observation window, Doppler estimates from slow-time observations can be much more reliable.
\begin{align}\label{eq_glrt_step3}
    \llrlog(\hhatfs) &= \frac{1}{\sigmatilde^2} \normbig{\hhatfs}_F^2 \\ \nonumber &~~~~ - \frac{1}{\sigmatilde^2} \min_{\alpha, \tau, \nu} \normbig{\hhatfs - \alpha \, \bb(\tau) \cc^{H}(\nu) }_F^2   \hdet \eta~.
\end{align}
Note that the second term in \eqref{eq_glrt_step3} has the same form as \eqref{eq_apes_step2}. Hence, using similar steps to those in Sec.~\ref{sec_step3_single}, $\alpha$ can be estimated using \eqref{eq_2d_fft_gain} and the GLRT in \eqref{eq_glrt_step3} becomes
\begin{align}\label{eq_glrt_step3_2}
    \llrlog(\hhatfs) &= \max_{\tau, \nu} \frac{ \absbigs{ \bb^H(\tau) \hhatfs \cc(\nu) }^2}{NM \sigmatilde^2 }  \hdet \eta~.
\end{align}
Contrary to Algorithm~\ref{alg_apes_glrt}, we propose to perform multiple target detection using the metric in \eqref{eq_glrt_step3_2} (which is the output of 2-D FFT, as discussed in Sec.~\ref{sec_step3_single}) by searching for peaks in \eqref{eq_glrt_step3_2} that exceed the threshold $\eta$, without interference cancellation iterations\footnote{The reason is that the resolution of the CFO estimated from $\DD(\nu)$ in \eqref{eq_ici_D} is $1/ (\fc T)$, while the resolution in $\nu$ obtained via $\cc(\nu)$ in \eqref{eq_steer_doppler} is $1 / (\fc M \Tsym)$. Therefore, while targets may interfere with each other in the CFO domain due to poor resolution, the probability of mutual target interference in the delay-Doppler domain is quite low.}. This can be done by a cell-averaging constant false alarm rate (CFAR) detector that operates on the 2-D FFT output in \eqref{eq_glrt_step3_2} \cite[Ch.~6.2.4]{richards2005fundamentals}. Similar to Sec.~\ref{sec_step3_single}, ambiguity in Doppler values of the resulting detections are resolved using $\nuhatcfo$ via \eqref{eq_resolve_ambiguity_cfo}. The overall algorithm for detection of multiple targets residing at the same angle-CFO cell is summarized in Algorithm~\ref{alg_glrt_delayDoppler}.

\begin{algorithm}
	\caption{Delay-Doppler Recovery from Channel Estimates and Doppler Ambiguity Resolution via ICI Exploitation}
	\label{alg_glrt_delayDoppler}
% 	\footnotesize
	\begin{algorithmic}[1]
	    \State \textbf{Input:} CFO estimate $\nuhatcfo$, time domain channel estimate $\boldHhat$, probability of false alarm $\pfa$.
	    \State \textbf{Output:} Delay-Doppler-gain estimates of multiple targets $\{ \tauhat_{i}, \nuhat_{i}, \alphahat_{i} \}_{i=0}^{I-1}$.
	    \begin{enumerate}[label=(\alph*)]
	        \item Perform 2-D FFT on $\hhatfs$ in \eqref{eq_omp_step3} to obtain the delay-Doppler spectrum, i.e., GLRT metric in \eqref{eq_glrt_step3_2}.
	        \item Run a cell-averaging CFAR detector with the specified $\pfa$ to detect targets in delay-Doppler domain and estimate their gains via \eqref{eq_2d_fft_gain}.
	        \item For each detected target, use the CFO estimate $\nuhatcfo$ to resolve Doppler ambiguity via \eqref{eq_resolve_ambiguity_cfo}.
	    \end{enumerate}
	\end{algorithmic}
	\normalsize
\end{algorithm}

\section{Numerical Results}\label{sec_sim}
In this section, we evaluate the performance of the proposed ICI-aware sensing algorithm by considering an OFDM system with the parameters specified in Table~\ref{tab_parameters}. With the vehicular JRC scenarios in mind \cite{Eldar_SPM_JRC_2020}, we choose a small number of TX/RX antennas and low bandwidth for low-cost operation. For the signal model in \eqref{eq_ym_all_multi}, the data symbols $\boldX$ are randomly generated from the QPSK alphabet and the transmit beamformer is set to point towards $-30^\circ$, i.e., $\ftx = \atx^{*}(-30^\circ)$. In addition, we define the SNR of a target with reflection coefficient $\alpha_k$ in \eqref{eq_ym_all_multi} as $\snr = \abs{\alpha_k}^2/\sigma^2$. For benchmarking purposes, we compare the following schemes:
\begin{itemize}
    \item \textit{APES-UML:} The proposed ICI-aware sensing algorithm in Algorithm~\ref{alg_apes}.
    % \item \textit{APES-UML-Ref:} The APES-UML algorithm applied on a single-target version of the received data in \eqref{eq_ym_all_multi}, where only the reference target is present in the environment, i.e.,
    % \begin{align} \label{eq_ym_all_multi_single_target}
    % \boldY_i^{{\rm{ref}}} = \alpha^{(i)}_0 \DD(\nu_0)  \FF_N^{H} \Big(\boldX \odot \bb(\tau_0) \cc^{H}(\nu_0) \Big)  + \boldZ_i ~.
    % \end{align}
    \item \textit{2-D FFT:} The standard 2-D FFT method employed in the OFDM radar literature \cite{RadCom_Proc_IEEE_2011,braun2014ofdm}, whose processing chain is as follows. The angles estimated via MUSIC in Step~1 of APES-UML are used to construct receive beamformers and project the data cube $\boldY_i$ in \eqref{eq_ym_all_multi} onto fast-time/slow-time domain, i.e.,
    \begin{align}\label{eq_y_theta}
        \boldY_{\thetahat} = \sum_{i=0}^{\Nrx-1} \boldY_i \big[  \arx^{*}(\thetahat) \big]_i \in \complexset{N}{M} ~,
    \end{align}
    where $\thetahat$ denotes an angle estimated in Step~1. Then, we perform FFT over the columns of $\boldY_{\thetahat}$ in \eqref{eq_y_theta} to obtain frequency/slow-time observations:
    \begin{align}\label{eq_y_theta_freq_time}
        \boldYfs = \FF_N \boldY_{\thetahat}  \in \complexset{N}{M}  ~.
    \end{align}
    Finally, we apply the step~(a) and step~(b) of Algorithm~\ref{alg_glrt_delayDoppler} with $\boldYfs$ in  \eqref{eq_y_theta_freq_time} in place of $\hhatfs$ for target detection.
    
    \item \textit{2-D FFT (ICI-free):} The 2-D FFT method applied on the ICI-free version of the received data in \eqref{eq_ym_all_multi}, i.e.,
    \begin{align} \label{eq_ym_all_multi_ici_free}
    \boldY_i^{{\rm{ICI-free}}} = \sum_{k=0}^{K-1} \alpha^{(i)}_k  \FF_N^{H} \Big(\boldX \odot \bb(\tau_k) \cc^{H}(\nu_k) \Big)  + \boldZ_i ~.
    \end{align}
    This will be used to set an upper bound on the performance of APES-UML.
\end{itemize}
For all the schemes, we employ an identical CFAR detector with the probability of false alarm set as $\pfa = 10^{-4}$.

\begin{table}\footnotesize
\caption{OFDM Simulation Parameters}
\vspace{0.05in}
\centering
    \begin{tabular}{|l|l|}
        \hline
        \textbf{Parameter} & \textbf{Value} \\ \hline
        Carrier Frequency, $\fc$  & $60 \, \rm{GHz}$ \\ \hline
        Total Bandwidth, $B$ & $50 \, \rm{MHz}$ \\ \hline
        Number of Subcarriers, $N$ & $2048$ \\ \hline
        Subcarrier Spacing, $\deltaf$ & $24.41 \, \rm{kHz}$  \\ \hline
        Symbol Duration, $T$ & $40.96 \, \mu \rm{s}$  \\ \hline
        Cyclic Prefix Duration, $\Tcp$ & $10.24 \, \mu \rm{s}$  \\ \hline
        Range Resolution, $\deltar$ & $3 \, \rm{m}$ \\ \hline
        Unambiguous Range, $\Rmax$ & $6144 \, \rm{m}$ \\ \hline
        Maximum Range due to CP, $\Rmax \Tcp/T$ & $1536 \, \rm{m}$ \\ \hline
        Number of Symbols, $M$ & $64$ \\ \hline
        Total Symbol Duration, $\Tsym$ & $51.2 \, \rm{\mu s}$  \\ \hline
        Block Duration, $M \Tsym$  & $3.28 \, \rm{ms}$ \\ \hline 
        Velocity Resolution, $\deltav$ & $0.76 \, \rm{m/s}$ \\ \hline
        Unambiguous Velocity, $\vmax$ & $\pm 24.41 \, \rm{m/s}$  \\ 
        (\textit{Standard})  & \\ \hline
        Unambiguous Velocity, $N \vmax $ & $\pm 62500 \, \rm{m/s}$  \\
         (\textit{ICI Exploitation}) &  \\ \hline
        Number of TX Antennas, $\Ntx$ & $8$ \\ \hline
        Number of RX Antennas, $\Nrx$ & $8$ \\ \hline
    \end{tabular}
    \label{tab_parameters}
    \vspace{-0.1in}
\end{table}

In the following, we first demonstrate the ICI suppression and exploitation capability of the proposed approach via an illustrative example. Then, we assess its detection and estimation performance with respect to benchmark schemes. 
%First, we demonstrate the target parameter estimation performance of Algorithm~\ref{alg_apes} by investigating the outputs of individual steps.
% defined as $\abs{\alpha_k}^2/\sigma^2$ in \eqref{eq_ym_all_multi}

\subsection{Illustrative Example: ICI Suppression and Exploitation Capability of the Proposed Algorithm}
In order to showcase how ICI can be turned from foe to friend using the APES-UML approach in Algorithm~\ref{alg_apes}, we consider a challenging scenario from the perspective of radar detection/estimation, as shown in Fig.~\ref{fig_scenario_config}, where there exist five targets with velocity ambiguities, three of which reside at the same range-velocity-angle cell and two of which are located at the same velocity-angle cell, but with different ranges. In such a scenario, the standard ICI-ignorant OFDM radar algorithms (e.g., \cite{RadCom_Proc_IEEE_2011,braun2014ofdm}) cannot distinguish between Target~1, Target~2 and Target~3 as they fall into the same cell in all three domains. On the contrary, the proposed APES-UML algorithm can resolve these targets via ICI exploitation, as will be shown next through the different steps of Algorithm~\ref{alg_apes}.

% ----------------------------------------------------
% scenario with multiple targets for illustration of ICI exploitation capability
\begin{figure*}
	\centering
    %\vspace{-0.2in}
	\includegraphics[width=1\linewidth]{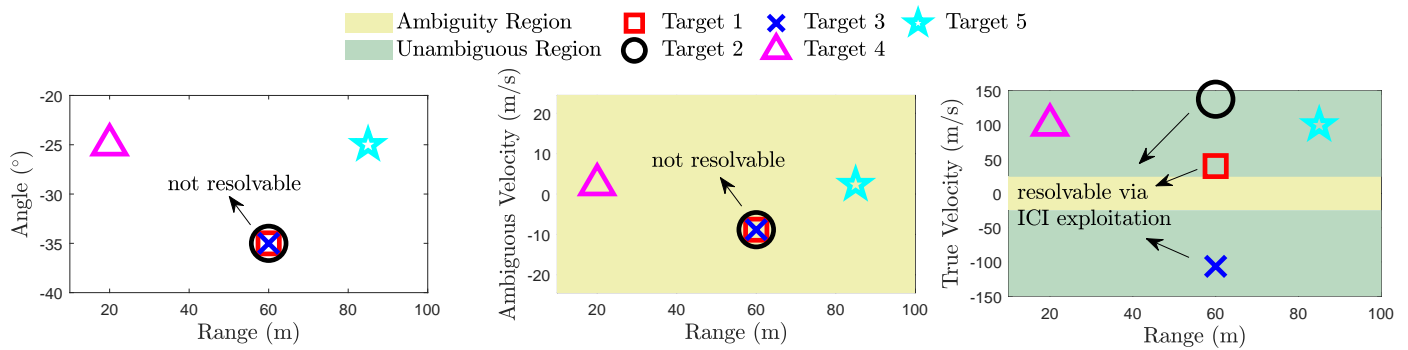}
	\caption{Scenario with multiple targets, illustrated in range-angle, range-ambiguous velocity and range-(true) velocity domains, where target SNRs are given by $\{ 20, 15, 10, 10, -10 \} \, \rm{dB}$, respectively.}
	\label{fig_scenario_config}
	%\vspace{-0.0in}
\end{figure*}

Fig.~\ref{fig_music_step1} shows the MUSIC spatial spectrum in \eqref{eq_spatial_spectrum} obtained at the output of Step~1, along with the results of ordinary beamforming. As expected, contrary to ordinary beamforming, MUSIC can correctly identify the different target angles with small number of RX antennas, which is crucial for angle-constrained beamforming in Step~2. In Fig.~\ref{fig_cfo_step2}, for each estimated angle $\theta$ in Step~1, we plot the evolution of a normalized version of the GLRT metric in \eqref{eq_glrt_single}, given by
    \begin{align}\label{eq_glrt_single_norm}
     0 \leq 1 - \frac{\arx^H(\theta) (\boldR^{*})^{-1} \arx(\theta)}{\arx^H(\theta) \boldQ^{*}(\nu)^{-1} \arx(\theta)} \leq 1 ~,
\end{align}
with respect to CFO ($\nu$) through successive iterations of the OMP based interference cancellation algorithm in Algorithm~\ref{alg_apes_glrt} (which corresponds to Step~2 of Algorithm~\ref{alg_apes}). For threshold setting in \eqref{eq_glrt_single_norm}, we use a heuristic value of $0.3$ to declare detection. At iteration~$0$, the strongest target at $\theta = -35^{\circ}$, Target~1 with $\snr = 20 \, \rm{dB}$, is detected at the peak of the CFO spectrum in \eqref{eq_glrt_single_norm}. Then, at iteration~1, we can observe the effect of cancelling the interference from Target~1 as a valley in the CFO spectrum centered around the velocity of Target~1. In compliance with the scenario in Fig.~\ref{fig_scenario_config}, the second strongest target at $\theta = -35^{\circ}$, Target~2 with $\snr = 15 \, \rm{dB}$, yields the largest value in the CFO spectrum at iteration~1. It is seen that as the iterations proceed with successive interference cancellation, weaker targets become more pronounced in the CFO spectrum (e.g., Target~3 from iteration~0 to iteration~2), which implies that the proposed OMP based algorithm in Algorithm~\ref{alg_apes_glrt} can successfully eliminate strong target echoes and enable detection of weak targets in the CFO domain. At the final iteration in Fig.~\ref{fig_cfo_step2_1}, the effects of all targets are removed and thus the peak of the CFO spectrum does not exceed the threshold. Similar trends can be observed in Fig.~\ref{fig_cfo_step2_2}, where, for the illuminated angle $\theta = -25^{\circ}$, a single target is detected at $\nu = 100 \, \rm{m/s}$, corresponding to the combined response of Target~4 and Target~5 in Fig.~\ref{fig_scenario_config}.

% ----------------------------------------------------
% multiple-target step 1
\begin{figure}
	\centering
    %\vspace{-0.2in}
	\includegraphics[width=0.9\linewidth]{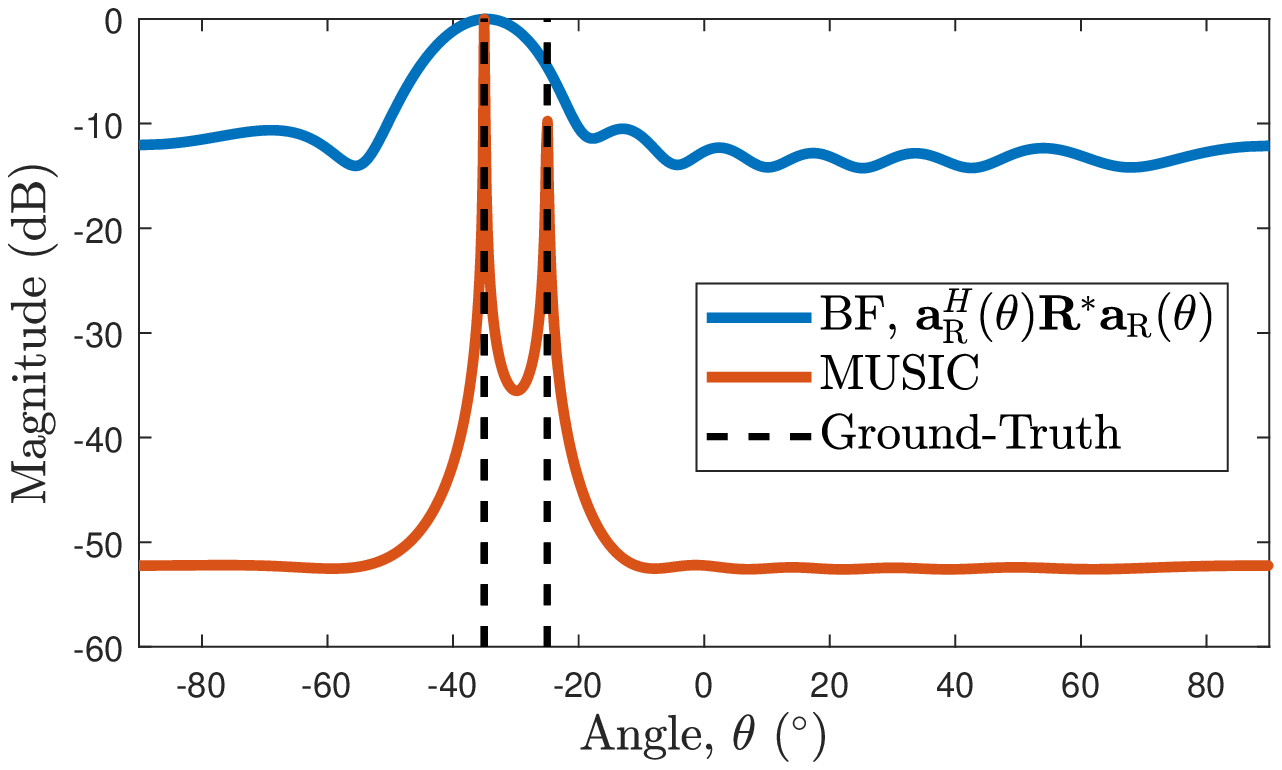}
	\caption{MUSIC spatial spectrum of OFDM radar in Step~1 along with the results of ordinary beamforming (BF) for the scenario in Fig.~\ref{fig_scenario_config}.}
	\label{fig_music_step1}
	%\vspace{-0.0in}
\end{figure}

% ----------------------------------------------------
% multiple-target step 2, angle = -35 and -25 deg
\begin{figure}
        \begin{center}
        %\vspace{-0.22in}
        \subfigure[]{
			 \label{fig_cfo_step2_1}
			 \includegraphics[width=0.47\textwidth]{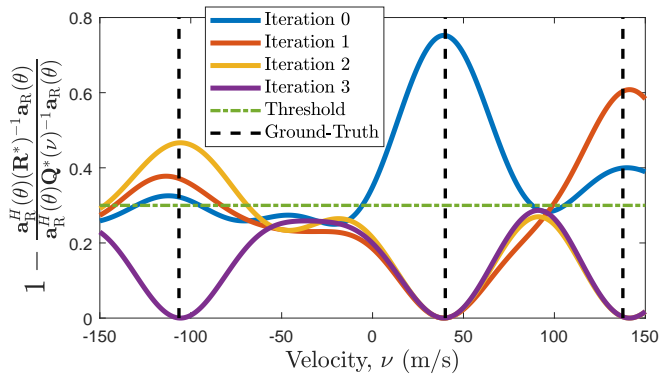}
			 %\vspace{-1in}
			 %\vspace{-5em}
			 %\hspace{-0.5in}
		}
        %\hfill 
        \subfigure[]{
			 \label{fig_cfo_step2_2}
			 \includegraphics[width=0.47\textwidth]{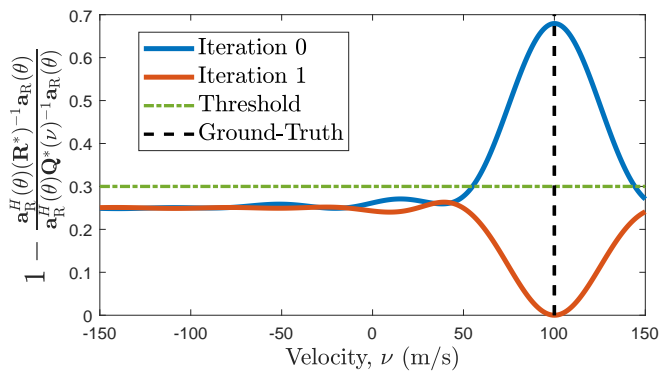}
		}
		
		\end{center}
		\vspace{-0.2in}
        \caption{Evolutions of CFO spectrums (i.e., normalized version of the GLRT metric in \eqref{eq_glrt_single} with respect to CFO $\nu$) obtained for \subref{fig_cfo_step2_1} $\theta = -35^\circ$ and \subref{fig_cfo_step2_2} $\theta = -25^\circ$ through successive iterations of interference cancellation based OMP procedure in Algorithm~\ref{alg_apes_glrt} for the scenario in Fig.~\ref{fig_scenario_config}. As the algorithm proceeds through iterations, the effect of interference cancellation manifests itself as valleys in the CFO spectrum corresponding to the velocity of the strongest target in the corresponding iteration.}  
        \label{fig_cfo_step2}
        %\vspace{-0.12in}
\end{figure}

The results obtained in Step~2 in Fig.~\ref{fig_cfo_step2} reveal one of the core properties of the proposed ICI-aware sensing algorithm: the \textit{multi-target ICI exploitation capability} with arbitrary transmit symbols. Precisely, the proposed algorithm can resolve Target~1, Target~2 and Target~3 in the CFO domain (indicated as true velocity in the rightmost subfigure in Fig.~\ref{fig_scenario_config}) by exploiting the velocity information conveyed by the ICI effect. As seen from Table~\ref{tab_parameters}, the ICI effect yields an unambiguous velocity that is $N$ times higher than the standard limit (e.g., in \cite{ICI_OFDM_TSP_2020,RadCom_Proc_IEEE_2011,braun2014ofdm}) by virtue of $N$ times faster sampling of fast-time domain compared to slow-time domain in \eqref{eq_ym_all_multi}. Hence, the proposed ICI exploitation approach can distinguish Target~1, Target~2 and Target~3 as separate objects and estimate their true (i.e., unambiguous) velocities. This is only possible through the novel formulation of ICI-aware sensing in Sec.~\ref{sec_ici_aware_apes}, where we decouple the problem of estimating $\nu$ in the fast-time phase rotation matrix $\boldD(\nu)$ from that of estimating $\nu$ in the slow-time steering vector $\cc(\nu)$\footnote{For instance, the algorithm in \cite{ICI_OFDM_TSP_2020} cannot exploit ICI due to coupled estimation of velocity in fast- and slow-time domains.}. On the other hand, the standard 2-D FFT based OFDM radar processing approach \cite{RadCom_Proc_IEEE_2011,braun2014ofdm} only uses slow-time phase progressions for Doppler estimation and thus fails to resolve Target~1, Target~2 and Target~3 since they appear as a unique target in range-angle-ambiguous velocity domains, as shown in Fig.~\ref{fig_scenario_config}.

% ----------------------------------------------------
% Step 3, angle = -35 deg and -25 deg
\begin{figure}
        \begin{center}
        %\vspace{-0.22in}
        \subfigure[]{
			 \label{fig_range_profile_step3_angle_m35}
			 \includegraphics[width=0.47\textwidth]{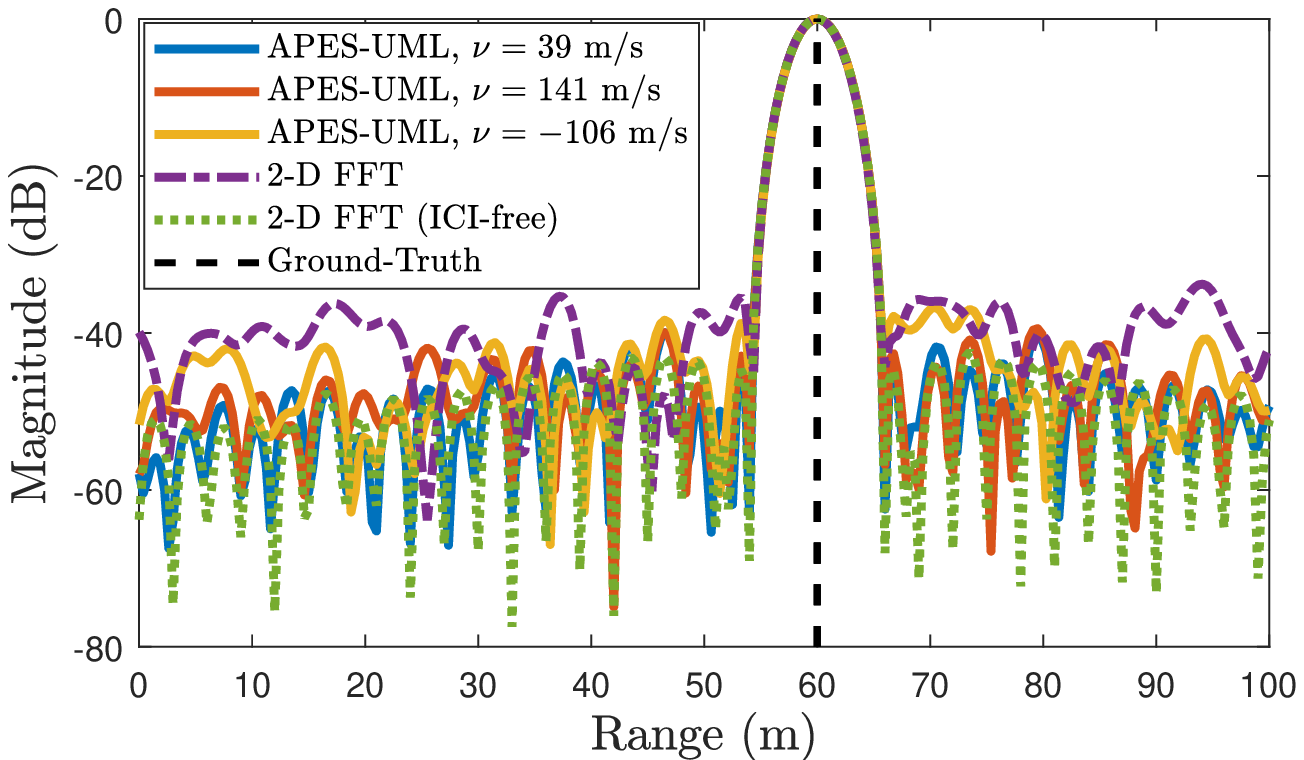}
			 %\vspace{-1in}
			 %\vspace{-5em}
			 %\hspace{-0.5in}
		}
        %\hfill 
        \subfigure[]{
			 \label{fig_range_profile_step3_angle_m25}
			 \includegraphics[width=0.47\textwidth]{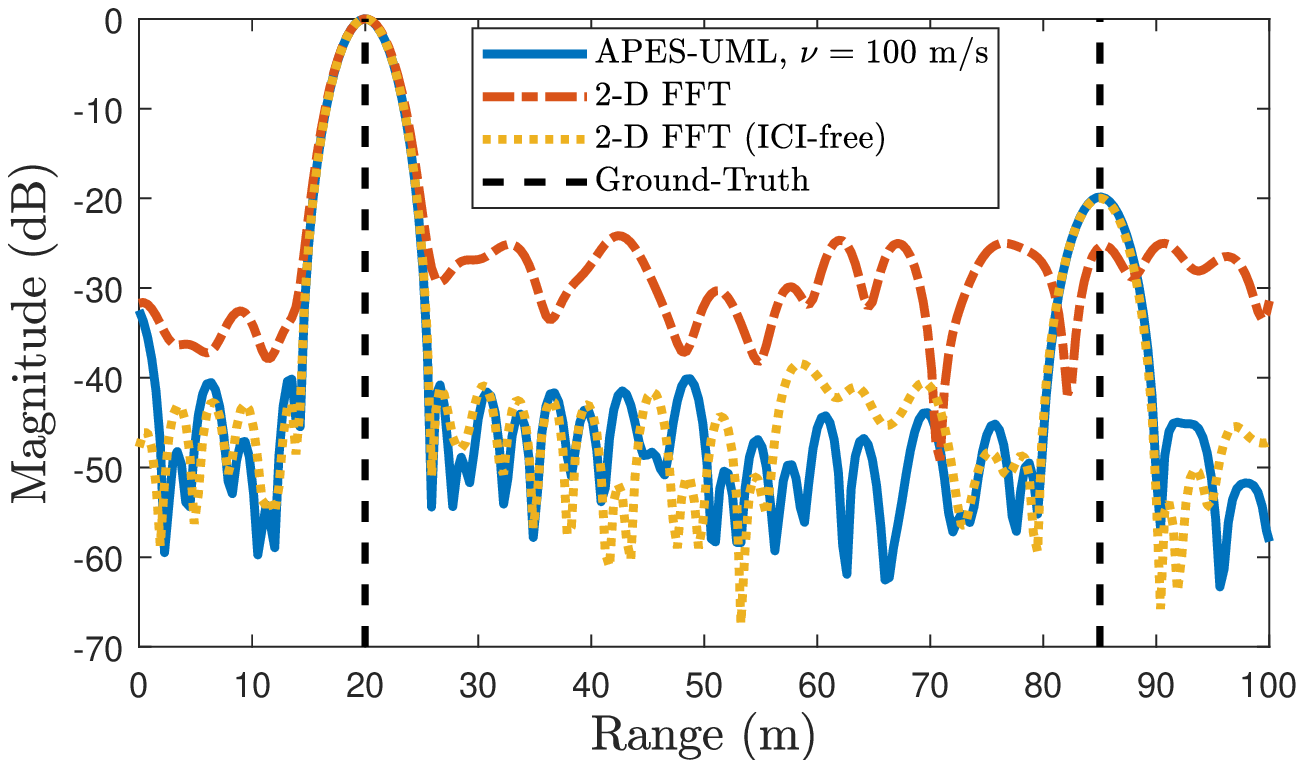}
		}
		
		\end{center}
		\vspace{-0.2in}
        \caption{Range profiles obtained for \subref{fig_range_profile_step3_angle_m35} $\theta = -35^\circ$ and \subref{fig_range_profile_step3_angle_m25} $\theta = -25^\circ$ in Algorithm~\ref{alg_glrt_delayDoppler}, corresponding to CFOs estimated in Fig.~\ref{fig_cfo_step2}, along with the profiles obtained by the FFT benchmarks. When the standard FFT method is employed, the ICI effect leads to masking of Target~5, while APES-UML can successfully eliminate ICI to make the target peak visible as in the ICI-free case.} 
        \label{fig_range_profile_step3}
        \vspace{-0.2in}
\end{figure}

We now investigate the output of Step~3 of Algorithm~\ref{alg_apes} (implemented using Algorithm~\ref{alg_glrt_delayDoppler}), which uses the channel estimates from Step~2 to detect targets in the delay-Doppler domain. Fig.~\ref{fig_range_profile_step3} illustrates the range profiles obtained for different angles and CFOs, estimated in Step~1 and Step~2, respectively, along with the range profiles of the FFT based benchmarks. For $\theta = -35^\circ$, it is observed that as Target~1, Target~2 and Target~3 lie in the same range bin, their respective range profiles have almost an identical shape. Since these targets are already resolved by the proposed APES-UML algorithm in the CFO domain in the previous step, being co-located in the range domain does not have any effect on their detection performance. For $\theta = -25^\circ$, two targets, Target~4 and Target~5, appear at different locations in the range profile of APES-UML corresponding to the CFO, shown in Fig.~\ref{fig_cfo_step2_2}, and the accompanying channel estimate, which is consistent with the scenario in Fig.~\ref{fig_scenario_config}. It is observed from both Fig.~\ref{fig_range_profile_step3_angle_m35} and Fig.~\ref{fig_range_profile_step3_angle_m25} that the ICI effect leads to increased side-lobe levels for the standard 2-D FFT approach. Meanwhile, the proposed APES-UML algorithm can achieve approximately the same side-lobe levels as the ICI-free case, which proves its \textit{multi-target ICI suppression capability} (i.e., it can effectively mitigate the ICI effects caused by multiple targets having different velocities via accurate estimation of their CFOs in Step~2). Moreover, we see that ICI-induced high side-lobe levels leads to masking of Target~5 in the range profile of the FFT method, while APES-UML and FFT in the ICI-free case produce a peak at the location of Target~5. 

\vspace{-0.1in}
With regard to the implications of Fig.~\ref{fig_cfo_step2} and Fig.~\ref{fig_range_profile_step3}, it is worth emphasizing another important property of the proposed method: \textit{spatial filtering}, which is enabled by the MIMO architecture and the APES framework developed in Sec.~\ref{sec_apes_beamform}. By comparing Fig.~\ref{fig_cfo_step2_1} and Fig.~\ref{fig_cfo_step2_2}, we notice that targets located at different angles do not spill much energy into each other's CFO spectrum. Similarly, inspecting Fig.~\ref{fig_range_profile_step3_angle_m35} and Fig.~\ref{fig_range_profile_step3_angle_m25}, no leakage can be observed between the range profiles corresponding to different angles. We accomplish this by designing the APES-like cost function in \eqref{eq_apes}, or equivalently, in \eqref{eq_apes_multiple}, to perform joint optimization of beamformer, CFO and radar channel, which helps suppress energy leakage outside the desired angle. Therefore, the proposed APES-UML approach can separate out individual target reflections in the angular domain from the mixed signal in \eqref{eq_ym_all_multi} by leveraging the multiple-antenna structure.

%\vspace{-0.1in}
\subsection{Detection and Estimation Performance}
In this part, we study the detection and estimation performance of the considered OFDM sensing algorithms using $100$ independent Monte Carlo noise realizations. We consider a scenario with two targets as described in Table~\ref{tab_scenario_mc}, where Target~2 is chosen as the reference target to evaluate performance metrics\footnote{For the sake of fairness towards the FFT based benchmarks, detection decisions are based on ambiguous range-velocity values for the FFT based schemes and on true (unambiguous) range-velocity values for the APES-UML algorithm (i.e., contrary to the FFT based methods, APES-UML needs to resolve ambiguities to be able to declare detection).}. The aim is to investigate the masking effect of ICI (due to increased side-lobe levels) under a wide variety of operating conditions, including various SNRs of Target~2 ($\snr \, \rm{dB}$) and target velocities of both targets ($\nu \, \rm{m/s}$) in the presence of a strong target, Target~1. 

\begin{table}\footnotesize
\caption{Scenario with Varying SNR and Target Velocities}
\vspace{0.05in}
\centering
    \begin{tabular}{|l|l|l|l|l|}
        \hline
        & Range & Velocity & Angle & SNR \\ \hline
        Target~1 & $40 \, \rm{m}$ & $\nu \, \rm{m/s}$ & $-35^{\circ}$ & $25 \, \rm{dB}$ \\ \hline
        Target~2 (Reference) & $80 \, \rm{m}$ & $\nu \, \rm{m/s}$ & $-25^{\circ}$ & $\snr \, \rm{dB}$ \\ \hline
    \end{tabular}
    \label{tab_scenario_mc}
    \vspace{-0.1in}
\end{table}

Fig.~\ref{fig_pd_snr} shows the probability of detection of the reference target as a function of SNR for three different target velocities. In agreement with the side-lobe performances in Fig.~\ref{fig_range_profile_step3}, APES-UML significantly outperforms the standard FFT scheme and performs very close to the FFT benchmark that uses ICI-free observations, which proves that the proposed approach can effectively suppress the ICI effect associated with multiple targets. In addition, the detection performance of APES-UML is resilient to target velocity; it can attain the upper bound achievable through ICI-free observations at all target velocities. On the other hand, the performance of 2-D FFT deteriorates as the velocity increases since the ICI effect becomes more severe at higher velocities. At $\nu = 120 \, \rm{m/s}$, an OFDM radar employing standard 2-D FFT processing \cite{RadCom_Proc_IEEE_2011,braun2014ofdm} becomes completely blind within the SNR range of interest, which clearly indicates the significance of ICI-aware sensing in high-mobility scenarios. 

%%%%%%%%%%%%%%%%%%%%%%%%%%%%%%%%%%%%%%%%%%%%%
% Pd vs. SNR 
\begin{figure}
        \begin{center}
        %\vspace{-0.22in}
        \subfigure[]{
			 \label{fig_pd_snr_20mps}
			 \includegraphics[width=0.47\textwidth]{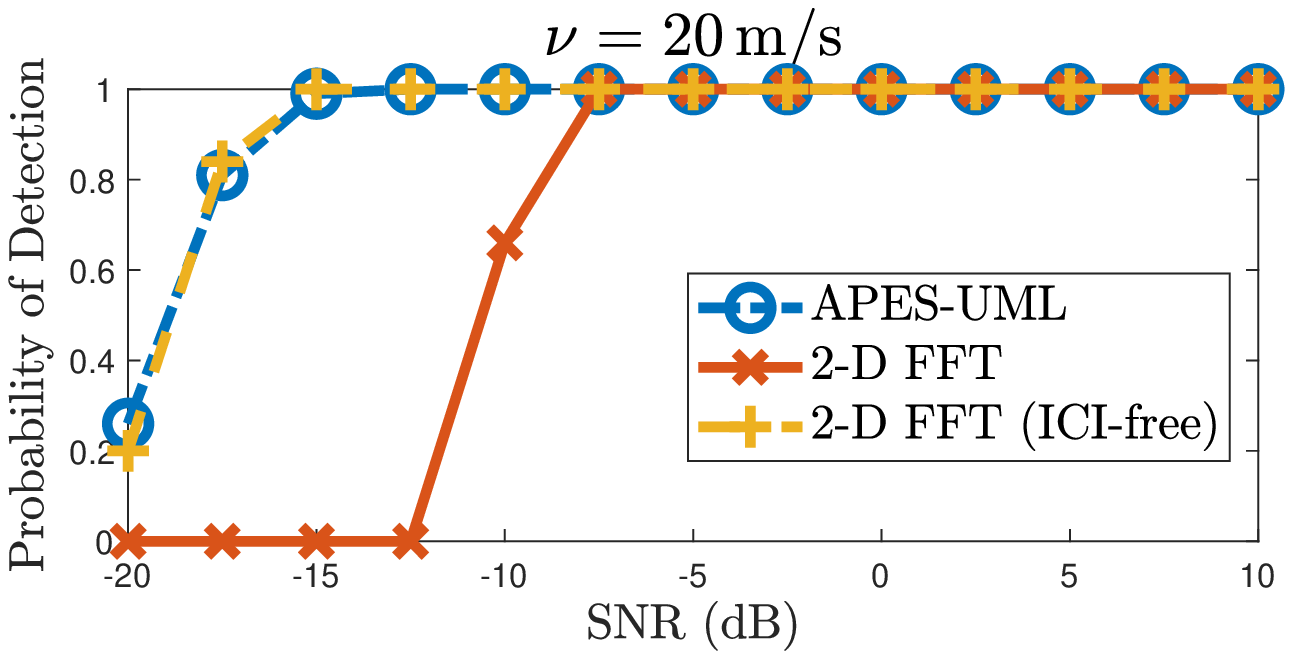}
			 %\vspace{-1in}
			 %\vspace{-5em}
			 %\hspace{-0.5in}
		}
        %\hfill 
        \subfigure[]{
			 \label{fig_pd_snr_70mps}
			 \includegraphics[width=0.47\textwidth]{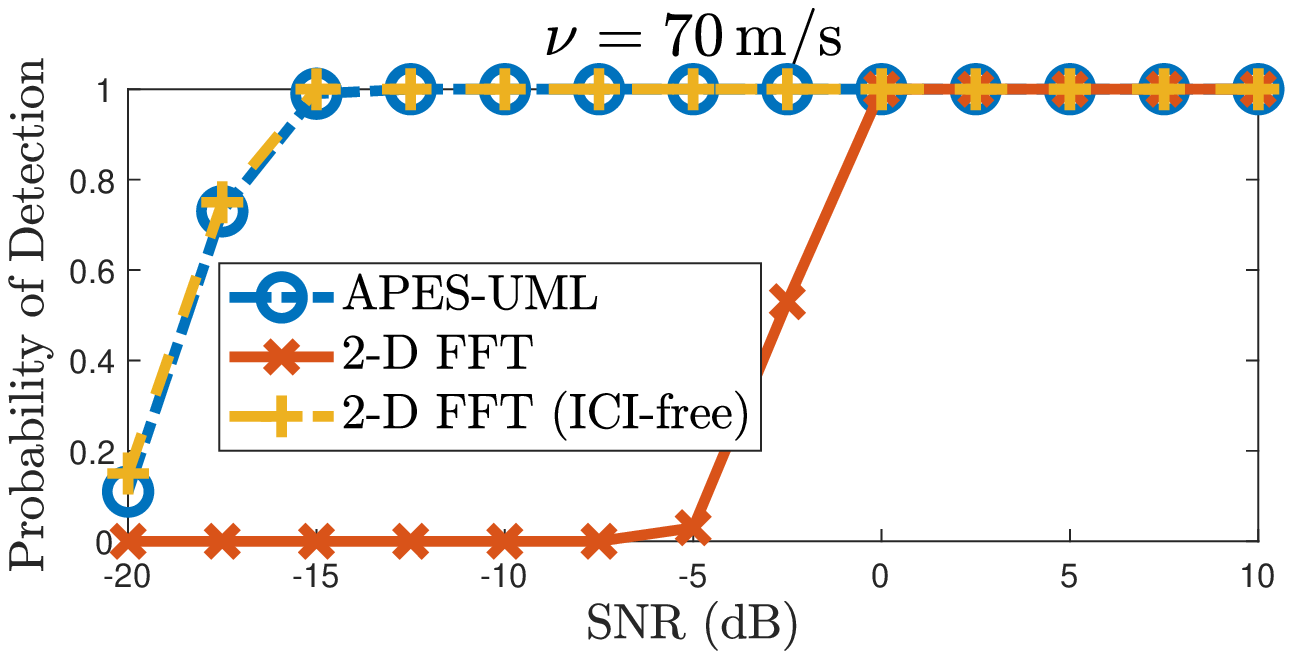}
		}
		
		\subfigure[]{
			 \label{fig_pd_snr_120mps}
			 \includegraphics[width=0.47\textwidth]{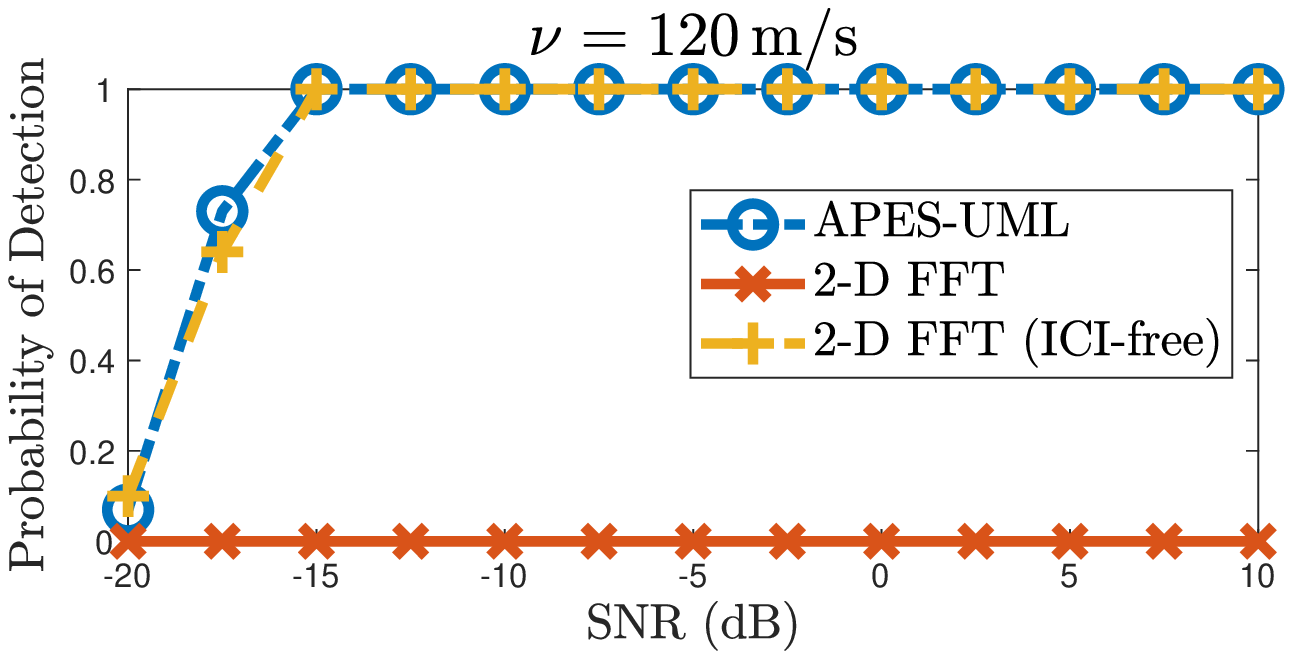}
		}
		
		\end{center}
		\vspace{-0.2in}
        \caption{Probability of detection of the reference target with respect to SNR for \subref{fig_pd_snr_20mps} $\nu = 20 \, \rm{m/s}$, \subref{fig_pd_snr_20mps} $\nu = 70 \, \rm{m/s}$, and \subref{fig_pd_snr_120mps} $\nu = 120 \, \rm{m/s}$ for the scenario in Table~\ref{tab_scenario_mc}.} 
        \label{fig_pd_snr}
        %\vspace{-0.12in}
        \vspace{-0.1in}
\end{figure}

In Fig.~\ref{fig_pfa_snr}, we examine false discovery rates (FDRs), defined as \cite{fdr_2011} ${\rm{FDR}} = {V}/{(V+S)}$, 
%\vspace{-0.1in}
%\begin{align}
%    {\rm{FDR}} = \frac{V}{V+S} ~,
%\end{align}
where $V$ is the total number of false alarms (i.e., detections that cannot be associated to the existing targets) and $S$ is the total number of reference target detections over all Monte Carlo runs. Similar to the probability of detection curves, APES-UML exhibits a false alarm performance that is very close to FFT with ICI-free observations, which proves the multi-target ICI compensation capability of the proposed approach. From Fig.~\ref{fig_pd_snr} and Fig.~\ref{fig_pfa_snr}, we conclude that, except for the case of high SNR and low-mobility, the standard FFT method fails.

%%%%%%%%%%%%%%%%%%%%%%%%%%%%%%%%%%%%%%%%%%%%%
% Pfa vs. SNR 
\begin{figure}
        \begin{center}
        %\vspace{-0.22in}
        \subfigure[]{
			 \label{fig_pfa_snr_20mps}
			 \includegraphics[width=0.47\textwidth]{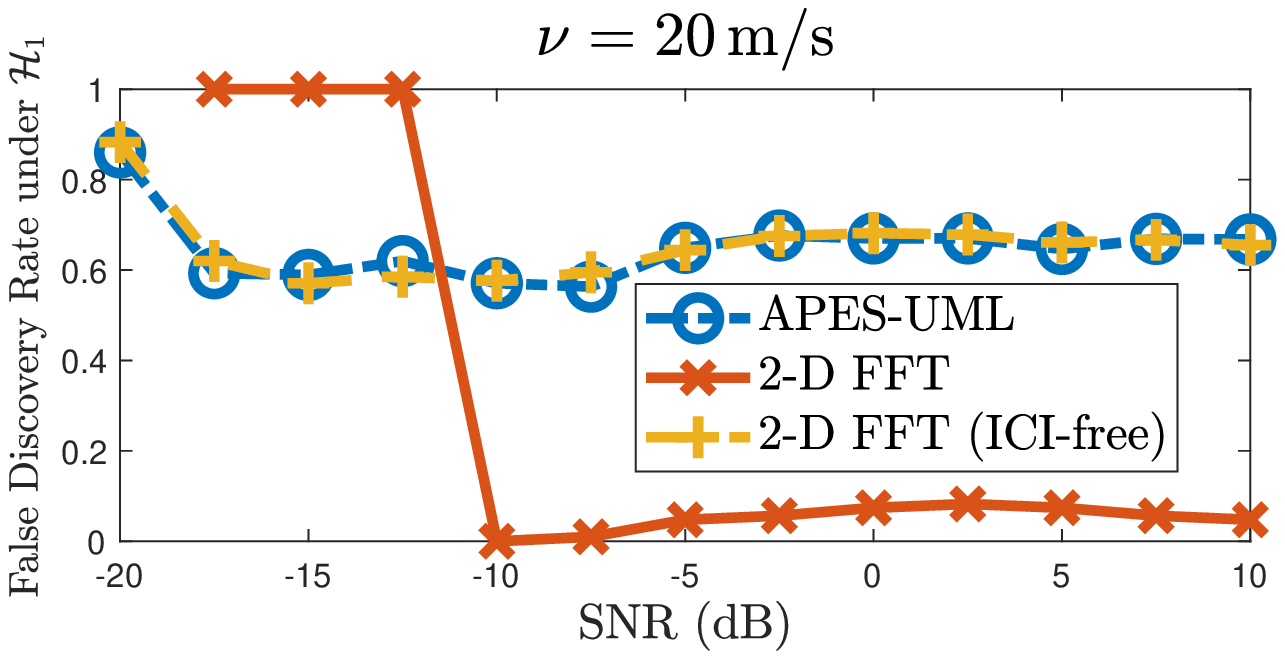}
			 %\vspace{-1in}
			 %\vspace{-5em}
			 %\hspace{-0.5in}
		}
        %\hfill 
        \subfigure[]{
			 \label{fig_pfa_snr_70mps}
			 \includegraphics[width=0.47\textwidth]{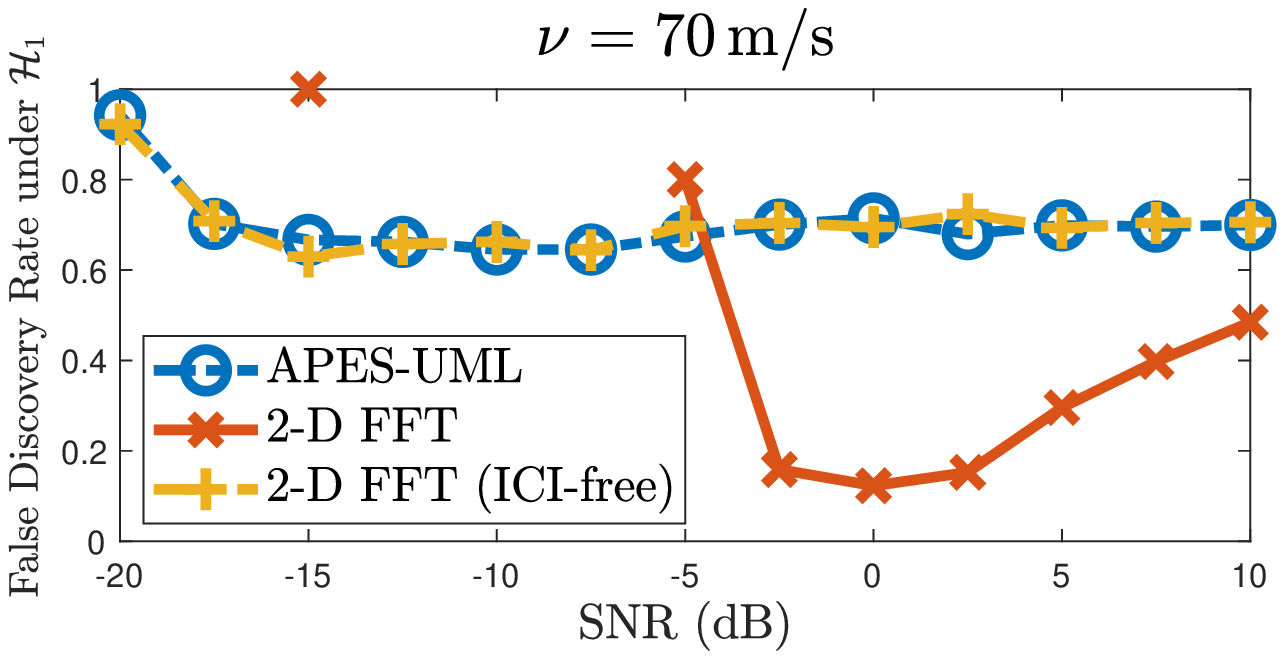}
		}
		
		\subfigure[]{
			 \label{fig_pfa_snr_120mps}
			 \includegraphics[width=0.47\textwidth]{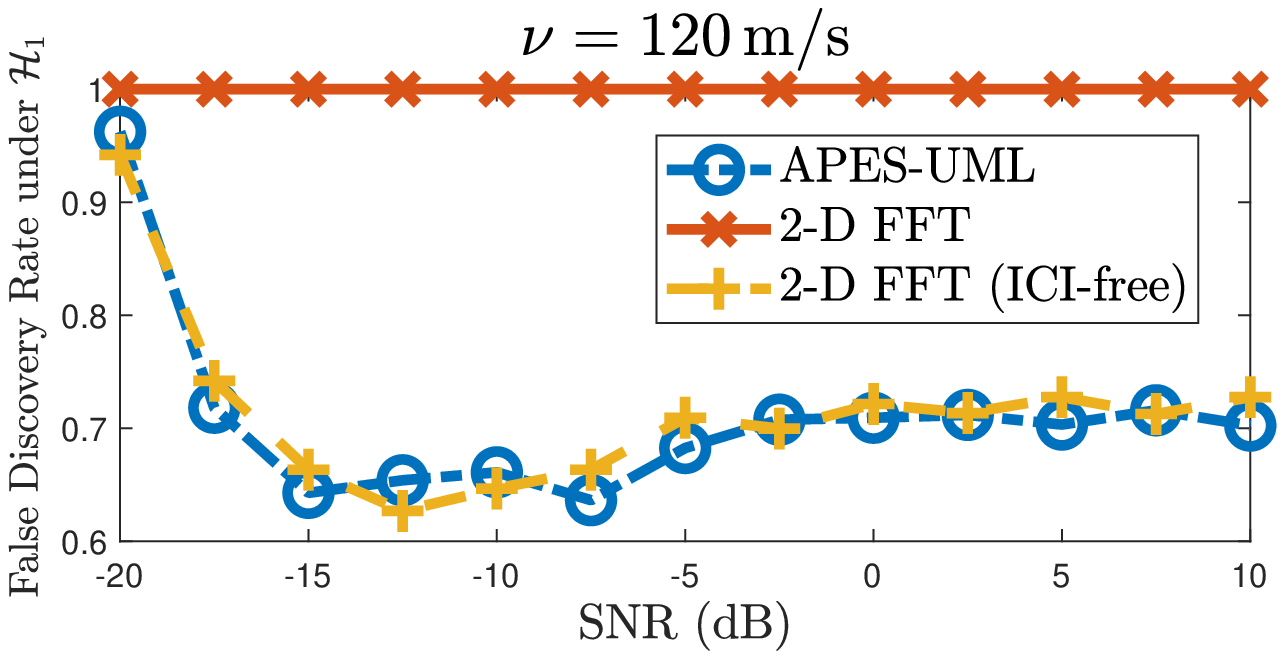}
		}
		
		\end{center}
		\vspace{-0.2in}
        \caption{False discovery rate under $\mathcal{H}_1$ with respect to SNR for \subref{fig_pfa_snr_20mps} $\nu = 20 \, \rm{m/s}$, \subref{fig_pfa_snr_70mps} $\nu = 70 \, \rm{m/s}$, and \subref{fig_pfa_snr_120mps} $\nu = 120 \, \rm{m/s}$ for the scenario in Table~\ref{tab_scenario_mc}. No values are shown if there is neither a detection nor a false alarm.}  
        \label{fig_pfa_snr}
        %\vspace{-0.12in}
        \vspace{-0.1in}
\end{figure}

We now turn our attention to estimation performances of the considered schemes. Fig.~\ref{fig_range_rmse_snr} shows the root mean-squared errors (RMSEs) of range estimation of the reference target as a function of SNR for various target velocities. The RMSE is calculated as \cite{grossi2020adaptive}
\begin{align}\label{eq_rmse}
    {\rm{RMSE}} = \left( \Eee \Big\{ (\widehat{R} - R)^2 ~ \lvert ~ {\rm{target~detected}} \Big\} \right)^{1/2} ~,
\end{align}
where $\widehat{R}$ and $R$ denote, respectively, the estimated and true range values. As expected, the 2-D FFT benchmark cannot estimate target parameters below a certain SNR threshold (depending on velocity) due to lack of detections, in compliance with Fig.~\ref{fig_pd_snr}. In addition, the proposed APES-UML algorithm achieves almost the same range RMSE performance as the ICI-free benchmark, which again evidences its superior ICI elimination capability. Moreover, APES-UML exhibits consistent range estimation performance at all target velocities; no noticeable changes can be observed in the range RMSE of APES-UML with increasing velocity, whereas the ICI effect significantly degrades the performance of FFT method, especially at high velocities. 

In Fig.~\ref{fig_vel_rmse_snr}, we plot the velocity RMSEs of the reference target with respect to SNR, calculated similarly to \eqref{eq_rmse}. For $\nu = 20 \, \rm{m/s}$, similar trends to the case of range RMSE can be observed. However, for $\nu = 70 \, \rm{m/s}$ and $\nu = 120 \, \rm{m/s}$, both of the FFT-based methods fail to correctly estimate velocity since the unambiguous velocity is $\vmax = \pm 24.41 \, \rm{m/s}$, as seen from Table~\ref{tab_parameters}. Through simultaneous mitigation (via joint CFO/channel estimation in Algorithm~\ref{alg_apes_glrt}) and exploitation (by resolving velocity ambiguity in Algorithm~\ref{alg_glrt_delayDoppler}) of ICI, APES-UML can estimate the true velocity of the target with high accuracy. Hence, the proposed approach can even outperform the ICI-free benchmark in such scenarios by turning ICI from foe to friend.

%%%%%%%%%%%%%%%%%%%%%%%%%%%%%%%%%%%%%%%%%%%%%
% range rmse vs. SNR 
\begin{figure}
        \begin{center}
        %\vspace{-0.22in}
        \subfigure[]{
			 \label{fig_range_rmse_snr_20mps}
			 \includegraphics[width=0.47\textwidth]{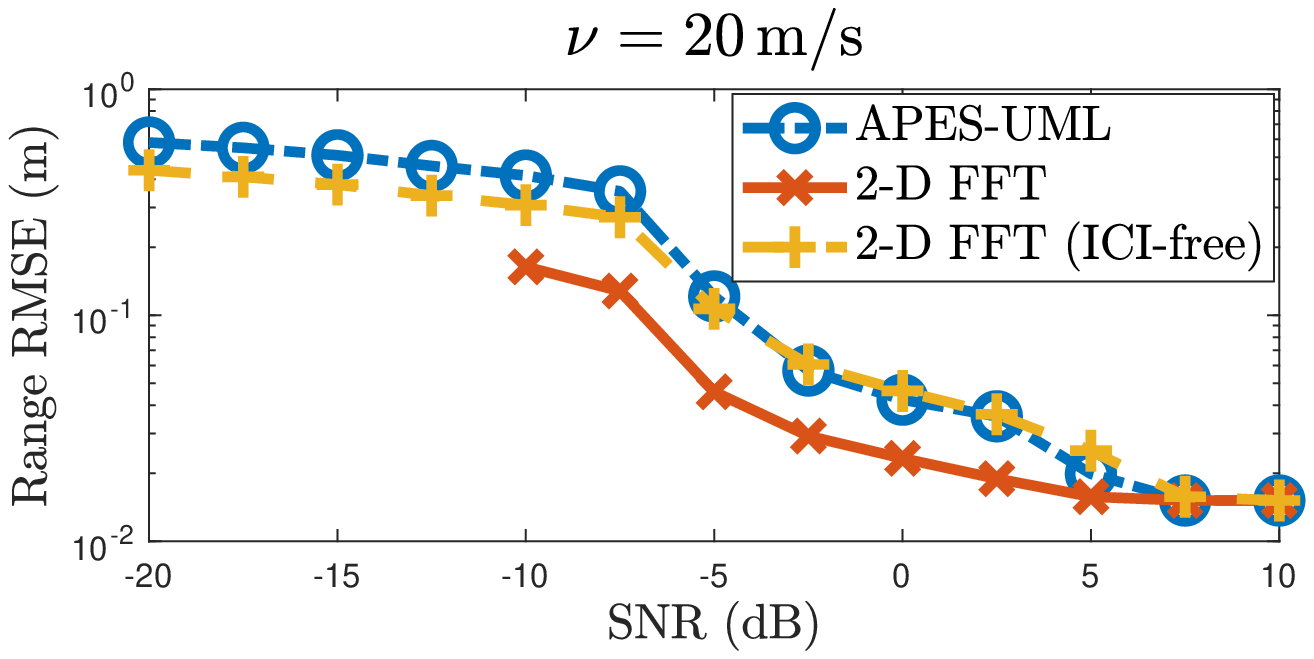}
			 %\vspace{-1in}
			 %\vspace{-5em}
			 %\hspace{-0.5in}
		}
        %\hfill 
        \subfigure[]{
			 \label{fig_range_rmse_snr_70mps}
			 \includegraphics[width=0.47\textwidth]{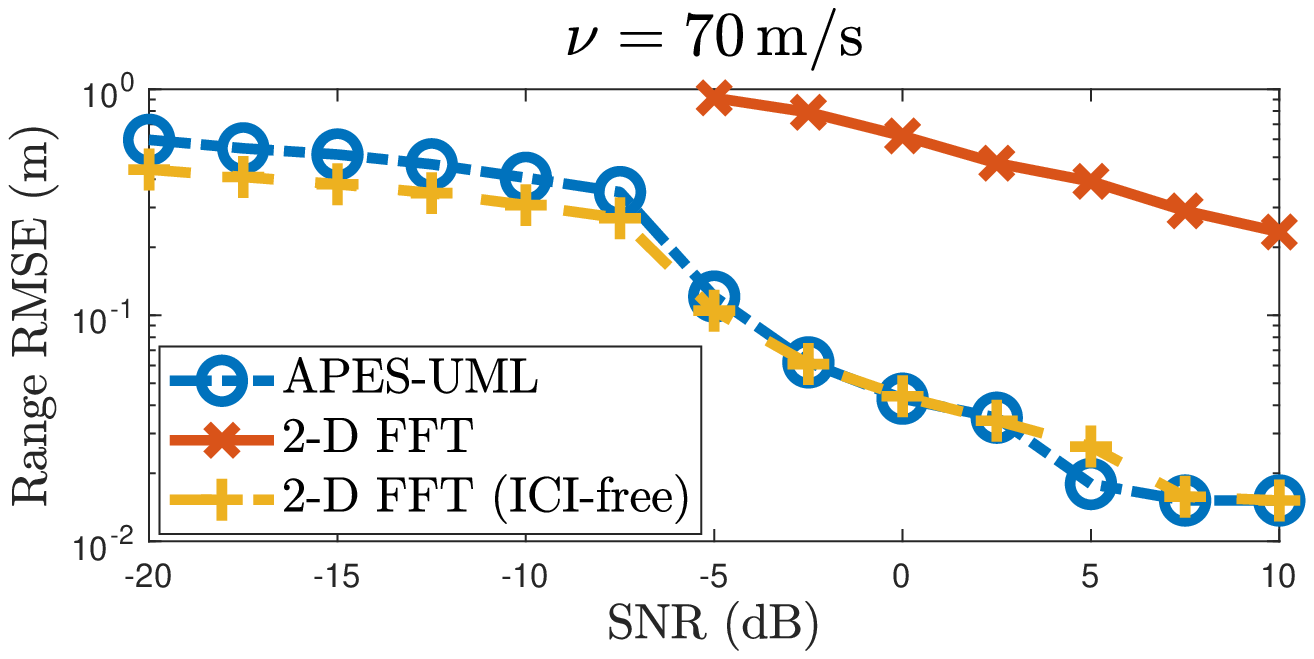}
		}
		
		\subfigure[]{
			 \label{fig_range_rmse_snr_120mps}
			 \includegraphics[width=0.47\textwidth]{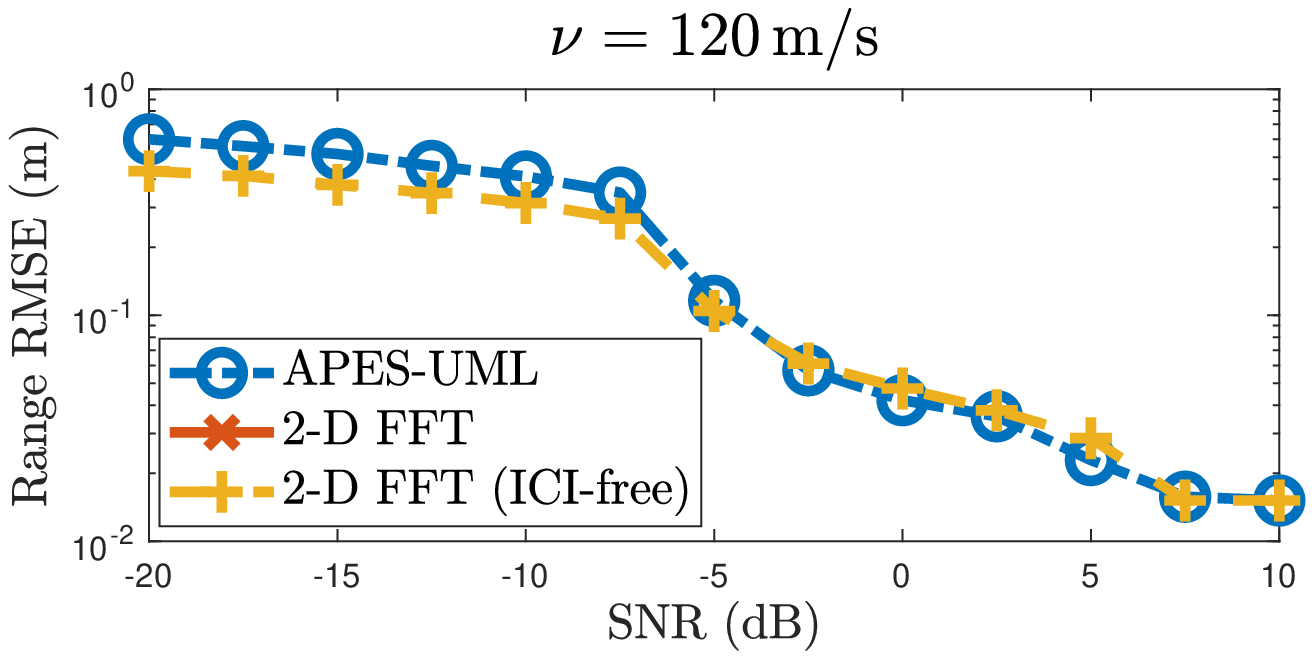}
		}
		
		\end{center}
		\vspace{-0.2in}
        \caption{Range RMSE of the reference target with respect to SNR for \subref{fig_range_rmse_snr_20mps} $\nu = 20 \, \rm{m/s}$, \subref{fig_range_rmse_snr_70mps} $\nu = 70 \, \rm{m/s}$, and \subref{fig_range_rmse_snr_120mps} $\nu = 120 \, \rm{m/s}$ for the scenario in Table~\ref{tab_scenario_mc}.}  
        \label{fig_range_rmse_snr}
        %\vspace{-0.12in}
        \vspace{-0.15in}
\end{figure}

%%%%%%%%%%%%%%%%%%%%%%%%%%%%%%%%%%%%%%%%%%%%%
% velocity rmse vs. SNR 
\begin{figure}
        \begin{center}
        %\vspace{-0.22in}
        \subfigure[]{
			 \label{fig_vel_rmse_snr_20mps}
			 \includegraphics[width=0.47\textwidth]{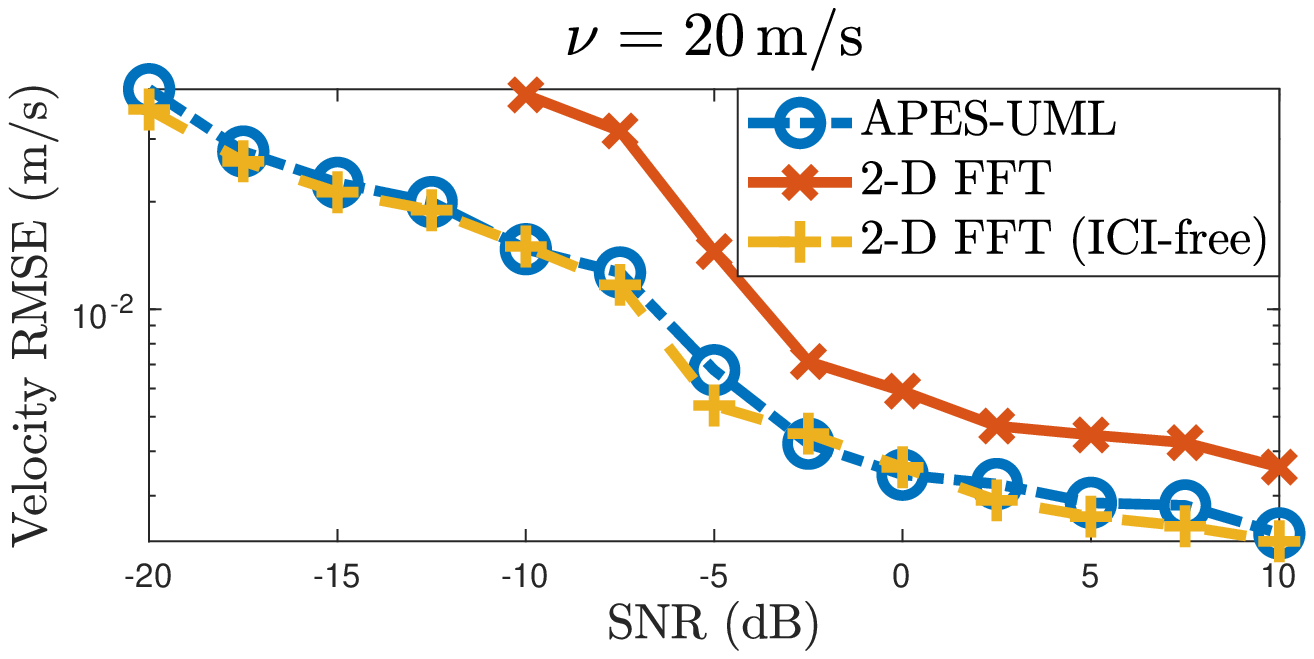}
			 %\vspace{-1in}
			 %\vspace{-5em}
			 %\hspace{-0.5in}
		}
        %\hfill 
        \subfigure[]{
			 \label{fig_vel_rmse_snr_70mps}
			 \includegraphics[width=0.47\textwidth]{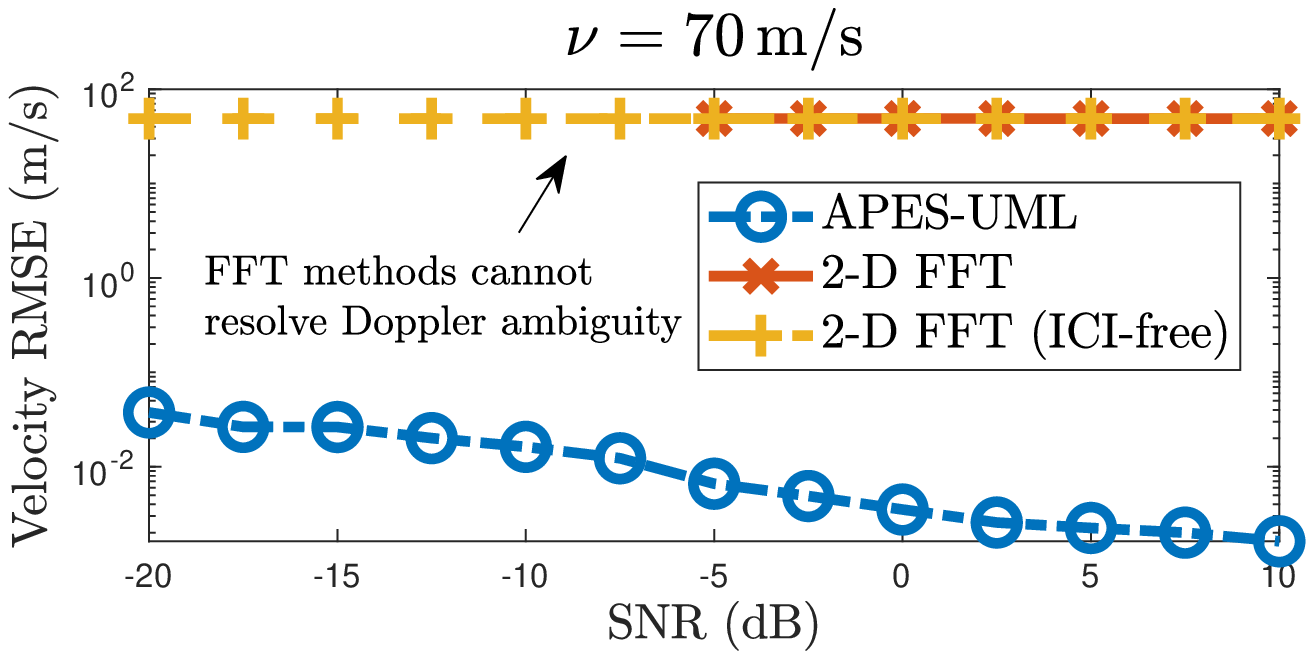}
		}
		
		\subfigure[]{
			 \label{fig_vel_rmse_snr_120mps}
			 \includegraphics[width=0.47\textwidth]{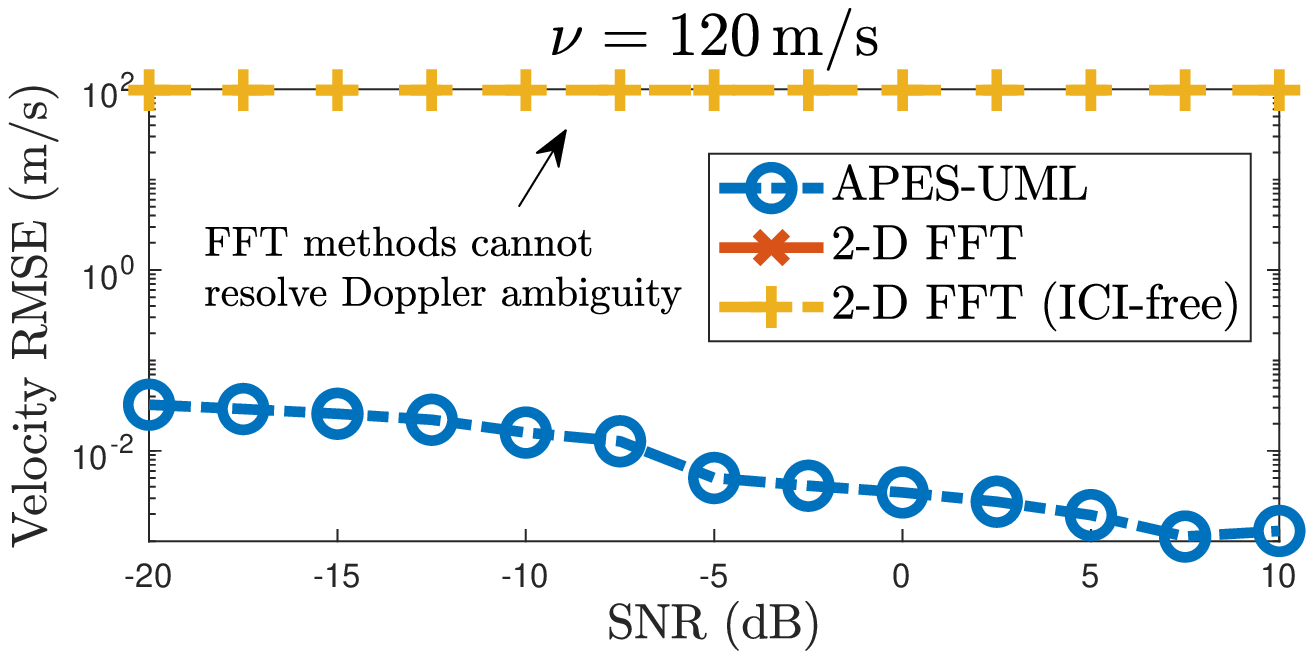}
		}
		
		\end{center}
		\vspace{-0.2in}
        \caption{Velocity RMSE of the reference target with respect to SNR for \subref{fig_vel_rmse_snr_20mps} $\nu = 20 \, \rm{m/s}$, \subref{fig_vel_rmse_snr_70mps} $\nu = 70 \, \rm{m/s}$, and \subref{fig_vel_rmse_snr_120mps} $\nu = 120 \, \rm{m/s}$ for the scenario in Table~\ref{tab_scenario_mc}.}  
        \label{fig_vel_rmse_snr}
        \vspace{-0.12in}
\end{figure}

% -----------------------------------------------------------
\vspace{-0.1in}
\section{Concluding Remarks}%\vspace{-0.15in}
%\vspace{-0.05in}
We have addressed the multi-target detection/estimation problem for a MIMO-OFDM DFRC system in the presence of non-negligible ICI caused by high-mobility targets. By formulating the ICI-aware sensing as a joint CFO/channel estimation problem, we have developed a novel three-step algorithm for multiple target detection and delay-Doppler-angle estimation. Remarkably, the proposed algorithm can mitigate the ICI effect induced by multiple targets having different radial velocities, which prevents degradation in detection/estimation performance, and at the same time exploit ICI to resolve Doppler ambiguity of the detected targets.
%The first step estimates target angles via the MUSIC algorithm. Next, for each estimated angle, a joint beamforming and CFO/channel estimation problem is formulated to extract ICI-decontaminated OFDM channel estimates in time and frequency. To enable detection of multiple targets in the CFO domain, we have designed an OMP based iterative procedure where at each iteration the strongest target is detected by resorting to a GLRT detector and its effect is cancelled from the received data. In the final step, unstructured channel estimates have been used to detect targets in the delay-Doppler domain and estimate their parameters. 
Extensive simulation results have shown that the proposed approach significantly outperforms the traditional FFT based method and can attain the performance achievable in the absence of ICI, with respect to various metrics such as probability of detection and range/velocity estimation accuracy. This indicates that ICI can be successfully suppressed in multi-target scenarios with arbitrary OFDM data symbols, which is crucial for high-speed vehicular JRC applications.
\vspace{-0.1in}

% % -----------------------------------------------------------
% \begin{appendices}

% \end{appendices}

% -----------------------------------------------------------

% References should be produced using the bibtex program from suitable
% BiBTeX files (here: strings, refs, manuals). The IEEEbib.bst bibliography
% style file from IEEE produces unsorted bibliography list.
% -------------------------------------------------------------------------
\bibliographystyle{IEEEbib}
\bibliography{ofdm_dfrc}

\begin{thebibliography}{10}

\bibitem{ICI_aware_ICASSP_2021}
Musa~Furkan Keskin, Henk Wymeersch, and Visa Koivunen,
\newblock ``{ICI}-aware parameter estimation for {MIMO-OFDM} radar via {APES}
  spatial filtering,''
\newblock in {\em 2021 IEEE International Conference on Acoustics, Speech and
  Signal Processing (ICASSP)}, accepted for publication.

\bibitem{jointRadCom_review_TCOM}
F.~{Liu}, C.~{Masouros}, A.~{Petropulu}, H.~{Griffiths}, and L.~{Hanzo},
\newblock ``Joint radar and communication design: Applications,
  state-of-the-art, and the road ahead,''
\newblock {\em IEEE Transactions on Communications}, pp. 1--1, 2020.

\bibitem{SPM_JRC_2019}
K.~V. {Mishra}, M.~R. {Bhavani Shankar}, V.~{Koivunen}, B.~{Ottersten}, and
  S.~A. {Vorobyov},
\newblock ``Toward millimeter-wave joint radar communications: A signal
  processing perspective,''
\newblock {\em IEEE Signal Processing Magazine}, vol. 36, no. 5, pp. 100--114,
  Sep. 2019.

\bibitem{SPM_Zheng_2019}
L.~{Zheng}, M.~{Lops}, Y.~C. {Eldar}, and X.~{Wang},
\newblock ``Radar and communication coexistence: An overview: A review of
  recent methods,''
\newblock {\em IEEE Signal Processing Magazine}, vol. 36, no. 5, pp. 85--99,
  2019.

\bibitem{chiriyath2017radar}
Alex~R Chiriyath, Bryan Paul, and Daniel~W Bliss,
\newblock ``Radar-communications convergence: Coexistence, cooperation, and
  co-design,''
\newblock {\em IEEE Transactions on Cognitive Communications and Networking},
  vol. 3, no. 1, pp. 1--12, 2017.

\bibitem{Eldar_SPM_JRC_2020}
Dingyou Ma, Nir Shlezinger, Tianyao Huang, Yimin Liu, and Yonina~C Eldar,
\newblock ``Joint radar-communication strategies for autonomous vehicles:
  Combining two key automotive technologies,''
\newblock {\em IEEE Signal Processing Magazine}, vol. 37, no. 4, pp. 85--97,
  2020.

\bibitem{Canan_SPM_2020}
C.~{Aydogdu}, M.~F. {Keskin}, G.~K. {Carvajal}, O.~{Eriksson}, H.~{Hellsten},
  H.~{Herbertsson}, E.~{Nilsson}, M.~{Rydstrom}, K.~{Vanas}, and
  H.~{Wymeersch},
\newblock ``Radar interference mitigation for automated driving: Exploring
  proactive strategies,''
\newblock {\em IEEE Signal Processing Magazine}, vol. 37, no. 4, pp. 72--84,
  2020.

\bibitem{DFRC_SPM_2019}
A.~{Hassanien}, M.~G. {Amin}, E.~{Aboutanios}, and B.~{Himed},
\newblock ``Dual-function radar communication systems: A solution to the
  spectrum congestion problem,''
\newblock {\em IEEE Signal Processing Magazine}, vol. 36, no. 5, pp. 115--126,
  Sep. 2019.

\bibitem{DFRC_Waveform_Design}
F.~{Liu}, L.~{Zhou}, C.~{Masouros}, A.~{Li}, W.~{Luo}, and A.~{Petropulu},
\newblock ``Toward dual-functional radar-communication systems: Optimal
  waveform design,''
\newblock {\em IEEE Transactions on Signal Processing}, vol. 66, no. 16, pp.
  4264--4279, Aug 2018.

\bibitem{RadCom_Proc_IEEE_2011}
C.~{Sturm} and W.~{Wiesbeck},
\newblock ``Waveform design and signal processing aspects for fusion of
  wireless communications and radar sensing,''
\newblock {\em Proceedings of the IEEE}, vol. 99, no. 7, pp. 1236--1259, July
  2011.

\bibitem{General_Multicarrier_Radar_TSP_2016}
M.~{Bică} and V.~{Koivunen},
\newblock ``Generalized multicarrier radar: Models and performance,''
\newblock {\em IEEE Transactions on Signal Processing}, vol. 64, no. 17, pp.
  4389--4402, Sep. 2016.

\bibitem{ICI_OFDM_TSP_2020}
Fuqiang Zhang, Zenghui Zhang, Wenxian Yu, and Trieu-Kien Truong,
\newblock ``Joint range and velocity estimation with intrapulse and
  intersubcarrier {Doppler} effects for {OFDM}-based {RadCom} systems,''
\newblock {\em IEEE Transactions on Signal Processing}, vol. 68, pp. 662--675,
  2020.

\bibitem{Firat_OFDM_2012}
R.~F. {Tigrek}, W.~J.~A. {De Heij}, and P.~{Van Genderen},
\newblock ``{OFDM} signals as the radar waveform to solve {Doppler}
  ambiguity,''
\newblock {\em IEEE Transactions on Aerospace and Electronic Systems}, vol. 48,
  no. 1, pp. 130--143, Jan 2012.

\bibitem{mmWave_JRC_TAES_2019}
Sayed~Hossein Dokhanchi, Bhavani~Shankar Mysore, Kumar~Vijay Mishra, and
  Bj{\"o}rn Ottersten,
\newblock ``A {mmWave} automotive joint radar-communications system,''
\newblock {\em IEEE Transactions on Aerospace and Electronic Systems}, vol. 55,
  no. 3, pp. 1241--1260, 2019.

\bibitem{MIMO_OFDM_radar_TAES_2020}
G.~{Hakobyan}, M.~{Ulrich}, and B.~{Yang},
\newblock ``{OFDM-MIMO} radar with optimized nonequidistant subcarrier
  interleaving,''
\newblock {\em IEEE Transactions on Aerospace and Electronic Systems}, vol. 56,
  no. 1, pp. 572--584, 2020.

\bibitem{OFDM_ICI_TVT_2017}
Gor Hakobyan and Bin Yang,
\newblock ``A novel intercarrier-interference free signal processing scheme for
  {OFDM} radar,''
\newblock {\em IEEE Transactions on Vehicular Technology}, vol. 67, no. 6, pp.
  5158--5167, 2017.

\bibitem{ICI_OFDM_radar_2015}
Jinsoo Lim, Sung-Rae Kim, and Dong-Joon Shin,
\newblock ``Two-step {Doppler} estimation based on intercarrier interference
  mitigation for {OFDM} radar,''
\newblock {\em IEEE Antennas and Wireless Propagation Letters}, vol. 14, pp.
  1726--1729, 2015.

\bibitem{multiCFO_TWC_2018}
Yinghao Ge, Weile Zhang, Feifei Gao, and Hlaing Minn,
\newblock ``Angle-domain approach for parameter estimation in high-mobility
  {OFDM} with fully/partly calibrated massive {ULA},''
\newblock {\em IEEE Transactions on Wireless Communications}, vol. 18, no. 1,
  pp. 591--607, 2018.

\bibitem{tian2017high}
Xuanxuan Tian, Tingting Zhang, Qinyu Zhang, and Zhaohui Song,
\newblock ``High accuracy {Doppler} processing with low complexity in
  {OFDM}-based {RadCom} systems,''
\newblock {\em IEEE Communications Letters}, vol. 21, no. 12, pp. 2618--2621,
  2017.

\bibitem{lellouch2014impact}
Gabriel Lellouch, Amit Mishra, and Michael Inggs,
\newblock ``Impact of the {Doppler} modulation on the range and {Doppler}
  processing in {OFDM} radar,''
\newblock in {\em 2014 IEEE Radar Conference}. IEEE, 2014, pp. 0803--0808.

\bibitem{lowComp_ICI_2020}
Qing Zhou, Xuanxuan Tian, and Tingting Zhang,
\newblock ``Low complexity {ICI} mitigation in {OFDM} based {RadCom} systems,''
\newblock in {\em 2020 International Conference on Wireless Communications and
  Signal Processing (WCSP)}. IEEE, 2020, pp. 1022--1027.

\bibitem{Est_MIMO_radar_2008}
Luzhou Xu, Jian Li, and Petre Stoica,
\newblock ``Target detection and parameter estimation for {MIMO} radar
  systems,''
\newblock {\em IEEE Transactions on Aerospace and Electronic Systems}, vol. 44,
  no. 3, pp. 927--939, 2008.

\bibitem{braun2014ofdm}
Martin Braun,
\newblock ``{OFDM} radar algorithms in mobile communication networks,''
\newblock {\em Ph. D. dissertation}, 2014.

\bibitem{zhang2014blind}
Weile Zhang, Qinye Yin, and Wenjie Wang,
\newblock ``Blind closed-form carrier frequency offset estimation for {OFDM}
  with multi-antenna receiver,''
\newblock {\em IEEE Transactions on Vehicular Technology}, vol. 64, no. 8, pp.
  3850--3856, 2014.

\bibitem{MUSIC_1986}
Ralph Schmidt,
\newblock ``Multiple emitter location and signal parameter estimation,''
\newblock {\em IEEE transactions on antennas and propagation}, vol. 34, no. 3,
  pp. 276--280, 1986.

\bibitem{Est_MIMO_radar_2013}
K.~{Luo} and A.~{Manikas},
\newblock ``Superresolution multitarget parameter estimation in {MIMO} radar,''
\newblock {\em IEEE Transactions on Geoscience and Remote Sensing}, vol. 51,
  no. 6, pp. 3683--3693, 2013.

\bibitem{Interference_MIMO_OFDM_Radar_2018}
Y.~L. {Sit}, B.~{Nuss}, and T.~{Zwick},
\newblock ``On mutual interference cancellation in a {MIMO OFDM} multiuser
  radar-communication network,''
\newblock {\em IEEE Transactions on Vehicular Technology}, vol. 67, no. 4, pp.
  3339--3348, April 2018.

\bibitem{OFDM_Radar_Phd_2014}
Martin Braun,
\newblock ``{OFDM} radar algorithms in mobile communication networks,''
\newblock {\em Karlsruher Institutes f{\"u}r Technologie}, 2014.

\bibitem{80211_Radar_TVT_2018}
P.~{Kumari}, J.~{Choi}, N.~{González-Prelcic}, and R.~W. {Heath},
\newblock ``{IEEE} 802.11ad-based radar: An approach to joint vehicular
  communication-radar system,''
\newblock {\em IEEE Transactions on Vehicular Technology}, vol. 67, no. 4, pp.
  3012--3027, April 2018.

\bibitem{Adaptive_RadCom_2019}
P.~{Kumari}, S.~A. {Vorobyov}, and R.~W. {Heath},
\newblock ``Adaptive virtual waveform design for millimeter-wave joint
  communication–radar,''
\newblock {\em IEEE Transactions on Signal Processing}, vol. 68, pp. 715--730,
  2020.

\bibitem{MIMO_OFDM_Single_Stream}
Y.~. {Liang}, R.~{Schober}, and W.~{Gerstacker},
\newblock ``Time-domain transmit beamforming for {MIMO-OFDM} systems with
  finite rate feedback,''
\newblock {\em IEEE Transactions on Communications}, vol. 57, no. 9, pp.
  2828--2838, 2009.

\bibitem{Visa_CFO_TSP_2006}
Timo Roman, Samuli Visuri, and Visa Koivunen,
\newblock ``Blind frequency synchronization in {OFDM} via diagonality
  criterion,''
\newblock {\em IEEE Transactions on Signal Processing}, vol. 54, no. 8, pp.
  3125--3135, 2006.

\bibitem{multiCFO_TSP_2019}
Y.~{Ge}, W.~{Zhang}, F.~{Gao}, and G.~Y. {Li},
\newblock ``Frequency synchronization for uplink massive {MIMO} with adaptive
  {MUI} suppression in angle domain,''
\newblock {\em IEEE Transactions on Signal Processing}, vol. 67, no. 8, pp.
  2143--2158, 2019.

\bibitem{swindlehurst1998time}
A~Lee Swindlehurst,
\newblock ``Time delay and spatial signature estimation using known
  asynchronous signals,''
\newblock {\em IEEE Transactions on Signal Processing}, vol. 46, no. 2, pp.
  449--462, 1998.

\bibitem{fascista2019millimeter}
Alessio Fascista, Angelo Coluccia, Henk Wymeersch, and Gonzalo Seco-Granados,
\newblock ``Millimeter-wave downlink positioning with a single-antenna
  receiver,''
\newblock {\em IEEE Transactions on Wireless Communications}, vol. 18, no. 9,
  pp. 4479--4490, 2019.

\bibitem{capon1969high}
Jack Capon,
\newblock ``High-resolution frequency-wavenumber spectrum analysis,''
\newblock {\em Proceedings of the IEEE}, vol. 57, no. 8, pp. 1408--1418, 1969.

\bibitem{mallat1993matching}
St{\'e}phane~G Mallat and Zhifeng Zhang,
\newblock ``Matching pursuits with time-frequency dictionaries,''
\newblock {\em IEEE Transactions on signal processing}, vol. 41, no. 12, pp.
  3397--3415, 1993.

\bibitem{richards2005fundamentals}
Mark~A Richards,
\newblock {\em Fundamentals of Radar Signal Processing},
\newblock Tata McGraw-Hill Education, 2005.

\bibitem{fdr_2011}
Priyadip Ray and Pramod~K Varshney,
\newblock ``Radar target detection framework based on false discovery rate,''
\newblock {\em IEEE Transactions on Aerospace and Electronic Systems}, vol. 47,
  no. 2, pp. 1277--1292, 2011.

\bibitem{grossi2020adaptive}
Emanuele Grossi, Marco Lops, and Luca Venturino,
\newblock ``Adaptive detection and localization exploiting the {IEEE} 802.11 ad
  standard,''
\newblock {\em IEEE Transactions on Wireless Communications}, 2020.

\end{thebibliography}

\newpage 
\renewcommand{\thepage}{}

\section*{\LARGE{Supplementary Material} \scriptsize{For} \normalsize{MIMO-OFDM Joint Radar-Communications: \\Is ICI Friend or Foe?}\\ \footnotesize{Musa Furkan Keskin, Henk Wymeersch, Visa Koivunen }}\label{sec_intro_supp}

\section{Spatial Covariance Matrix of OFDM Radar Observations in \eqref{eq_SCM_obs}}\label{sec_app_scm}

In this part, we derive the SCM of the radar data cube in \eqref{eq_SCM_obs}. Plugging \eqref{eq_ybbar} into \eqref{eq_SCM_obs} yields
\begin{align}\nonumber
\boldR &= \sum_{k_1=0}^{K-1} \sum_{k_2=0}^{K-1} \Bigg[ \alpha_{k_1}^{*} \alpha_{k_2} \, \atx^H(\theta_{k_1})\ftx^{*} \, \atx^T(\theta_{k_2})\ftx \, \arx^{*}(\theta_{k_1})  \, \arx^{T}(\theta_{k_2})  
\\  \nonumber
& ~~\times \bb^H(\tau_{k_1}) \Bigg( \sum_{m=0}^{M-1} \left[ \cc(\nu_{k_1}) \right]_m \left[ \cc^{*}(\nu_{k_2}) \right]_m  \diag{\xx_m}^H \\ \nonumber
 & ~~~~ \times \FF_N \DD^H(\nu_{k_1}) \DD(\nu_{k_2}) \FF_N^H  \diag{\xx_m}    \Bigg) \bb(\tau_{k_2}) \Bigg] \\ \label{eq_scm_derivation1}
    & ~~+ N M \sigma^2 \Imatrix ~,
\end{align}
where we invoke the law of large numbers to approximate the noise covariance as
\begin{align}
    \sum_{m=0}^{M-1} \Zbbar_m^H \Zbbar_m \approx \Eee \left\{ \sum_{m=0}^{M-1} \Zbbar_m^H \Zbbar_m  \right\} = N M \sigma^2 \Imatrix ~,
\end{align}
provided that $M$ and/or $N$ is sufficiently large. Similarly, to ignore signal-noise cross terms in \eqref{eq_scm_derivation1}, the law of large numbers is employed based on the fact that transmit data symbols and noise are uncorrelated.

Using \eqref{eq_ici_D} and the unitary property of the DFT matrix, we can write the direct and cross-target terms separately in \eqref{eq_scm_derivation1}:
\begin{align}\label{eq_all_terms_scm}
\boldR &= \boldR^{\rm{direct}} + \boldR^{\rm{cross}} + N M \sigma^2 \Imatrix ~,
\end{align}
where
\begin{align} \label{eq_r_direct}
\boldR^{\rm{direct}} &= \sum_{k=0}^{K-1} \Bigg[ \abs{\alpha_k}^2 \,  \abs{\atx^T(\theta_k)\ftx}^2 \, \arx^{*}(\theta_k)  \arx^T(\theta_k) \\ \nonumber
 \times  \bb^H(\tau_{k}) & \left( \sum_{m=0}^{M-1} \absbig{ \left[ \cc(\nu_{k}) \right]_m }^2 \diag{\xx_m}^H \diag{\xx_m}  \right)   \bb(\tau_{k}) \Bigg] ~,
\end{align}
\begin{align} \label{eq_r_cross}
& \boldR^{\rm{cross}} 
\\ \nonumber &= \sum_{k_1=0}^{K-1} \sum_{\substack{k_2=0 \\ k_2\neq k_1}}^{K-1} \Bigg[ \alpha_{k_1}^{*} \alpha_{k_2} \, \atx^H(\theta_{k_1})\ftx^{*} \, \atx^T(\theta_{k_2})\ftx \, \arx^{*}(\theta_{k_1})  \, \arx^{T}(\theta_{k_2})  
\\  \nonumber
& ~~\times \bb^H(\tau_{k_1})  \Bigg(  \sum_{m=0}^{M-1} \left[ \cc(\nu_{k_1}) \right]_m \left[ \cc^{*}(\nu_{k_2}) \right]_m \diag{\xx_m}^H \\ \nonumber
 & ~~~~ \times \FF_N \DD(\nu_{k_2}-\nu_{k_1}) \FF_N^H  \diag{\xx_m}   \Bigg) \bb(\tau_{k_2})  \Bigg] ~.
    \end{align}

Using \eqref{eq_boldX_all} and the unit-magnitude property of the steering vector elements in \eqref{eq_steer_delay} and \eqref{eq_steer_doppler}, the direct term in \eqref{eq_r_direct} becomes
\begin{align} \label{eq_r_direct2}
\boldR^{\rm{direct}} &= \norm{\boldX}_F^2 \sum_{k=0}^{K-1} \abs{\alpha_k}^2 \abs{\atx^T(\theta_k)\ftx}^2 \arx^{*}(\theta_k) \arx^T(\theta_k) ~.
\end{align}
Applying the law of large numbers and utilizing the covariance of $\boldX$ in \eqref{eq_cov_x}, we can approximate \eqref{eq_r_direct2} as
\begin{align} \label{eq_r_direct22}
\boldR^{\rm{direct}} &\approx NM \sigmax^2 \sum_{k=0}^{K-1} \abs{\alpha_k}^2 \abs{\atx^T(\theta_k)\ftx}^2 \arx^{*}(\theta_k) \arx^T(\theta_k) ~.
\end{align}

With regard to the cross-term in \eqref{eq_r_cross}, using the properties of the Hadamard product, we can write
\begin{align}\label{eq_A_m}
    \boldA_m &\triangleq \diag{\xx_m}^H \FF_N \DD(\nu_{k_2}-\nu_{k_1}) \FF_N^H  \diag{\xx_m} \\ \nonumber
    &= \left( \xx_m \xx_m^H \right)^{*} \odot  \FF_N \DD(\nu_{k_2}-\nu_{k_1}) \FF_N^H ~.
\end{align}
Assuming sufficiently large $M$, we invoke the law of large numbers to obtain
\begin{align} \label{eq_A_m_exp}
    &\sum_{m=0}^{M-1} \left[ \cc(\nu_{k_1}) \right]_m \left[ \cc^{*}(\nu_{k_2}) \right]_m \boldA_m \\ \nonumber & ~~~~ \approx \sum_{m=0}^{M-1} \left[ \cc(\nu_{k_1}) \right]_m \left[ \cc^{*}(\nu_{k_2}) \right]_m \Eee \{ \boldA_m \} ~,
\end{align}
where expectation is over the distribution of data symbols $\xx_m$. Based on \eqref{eq_A_m} and using the covariance of data symbols in \eqref{eq_cov_x}, we have
\begin{align} \label{eq_A_m_exp2}
    \Eee \{ \boldA_m \} &= \Eee \left\{ \left( \xx_m \xx_m^H \right)^{*}  \right\} \odot  \FF_N \DD(\nu_{k_2}-\nu_{k_1}) \FF_N^H \\ \nonumber
    &= \sigmax^2 \Imatrix \odot  \FF_N \DD(\nu_{k_2}-\nu_{k_1}) \FF_N^H \\ \nonumber
    &= \sigmax^2 \Imatrix \odot  \sum_{n=0}^{N-1} d_n \ff_n \ff_n^H
    \\ \nonumber
    &= \underbrace{ \left(\sigmax^2 + \sum_{n=0}^{N-1} d_n \right) }_{\triangleq \gamma} \Imatrix ~,
\end{align}
where
\begin{align}
    \FF_N &\triangleq \left[ \ff_0 ~ \ldots ~ \ff_{N-1} \right] ~, \\
    d_n &\triangleq \left[ \DD(\nu_{k_2}-\nu_{k_1}) \right]_{n,n} ~.
\end{align}
Inserting \eqref{eq_A_m_exp2} and \eqref{eq_A_m_exp} into \eqref{eq_r_cross} yields
\begin{align} \label{eq_r_cross2}
& \boldR^{\rm{cross}} 
\\ \nonumber &\approx \gamma  \sum_{k_1=0}^{K-1} \sum_{\substack{k_2=0 \\ k_2\neq k_1}}^{K-1} \Bigg[ \alpha_{k_1}^{*} \alpha_{k_2} \, \atx^H(\theta_{k_1})\ftx^{*} \, \atx^T(\theta_{k_2})\ftx \, \arx^{*}(\theta_{k_1})  \, \arx^{T}(\theta_{k_2})  
\\  \nonumber
& ~~~~~~\times \bb^H(\tau_{k_1}) \bb(\tau_{k_2}) \cc^H(\nu_{k_2}) \cc(\nu_{k_1})  \Bigg] ~.
\end{align}
Under the assumption in \eqref{eq_delayDoppler_separation}, the cross-term in \eqref{eq_r_cross2} disappears, i.e.,
\begin{align}\label{eq_cross_zero}
    \boldR^{\rm{cross}} \approx \boldzero_{\Nrx \times \Nrx} ~.
\end{align}
Finally, we insert \eqref{eq_r_direct22} and \eqref{eq_cross_zero} into \eqref{eq_all_terms_scm} to obtain
\begin{align}\nonumber
    \boldR &= NM \sigmax^2 \sum_{k=0}^{K-1} \abs{\alpha_k}^2 \abs{\atx^T(\theta_k)\ftx}^2 \arx^{*}(\theta_k) \arx^T(\theta_k) + N M \sigma^2 \Imatrix ~,
\end{align}
which completes the proof.

\end{document}